
\RequirePackage{ifthen}

\newcommand{\typof}{1} %

\newcommand{\longv}[1]{\ifthenelse{\equal{\typof}{0}}{}{#1}}
\newcommand{\shortv}[1]{\ifthenelse{\equal{\typof}{0}}{#1}{}}

\newcommand{\longshortv}[2]{\ifthenelse{\equal{\typof}{0}}{#2}{#1}}
\newcommand{\drop}[0]{\ifthenelse{\equal{\typof}{0}}{}{}}

\longshortv{
\documentclass[a4paper,10pt]{article}}{
\documentclass[runningheads]{llncs}}

\usepackage{graphicx, amsmath, amssymb, mathrsfs, stmaryrd, color, adjustbox}
\usepackage[hyphens]{url}
\longv{\usepackage{a4wide}}


\newenvironment{varitemize}
{
\begin{list}{\shortv{\labelitemii}\longv{\labelitemi}}
{\setlength{\itemsep}{0pt}
 \setlength{\topsep}{0pt}
 \setlength{\parsep}{0pt}
 \setlength{\partopsep}{0pt}
 \setlength{\leftmargin}{15pt}
 \setlength{\rightmargin}{0pt}
 \setlength{\itemindent}{0pt}
 \setlength{\labelsep}{5pt}
 \setlength{\labelwidth}{10pt}
}}
{
 \end{list} 
}

\newcounter{number}

\newenvironment{varenumerate}
{\begin{list}{\arabic{number}.}
  {
   \usecounter{number}
   \setlength{\labelwidth}{4.0mm}
   \setlength{\labelsep}{2.0mm}
   \setlength{\itemindent}{0.0mm}
   \setlength{\itemsep}{0.0mm}
   \setlength{\topsep}{0.0mm}
   \setlength{\parskip}{0.0mm}
   \setlength{\parsep}{0.0mm}
   \setlength{\partopsep}{0.0mm}
  }
}
{\end{list}}


\newcommand{\PA}{\mathsf{PA}}
\newcommand{\HA}{\mathsf{HA}}

\newcommand{\Nat}{\ensuremath{\mathbb{N}}}    
\newcommand{\Bool}{\ensuremath{\mathbb{B}}}

\newcommand{\seq}{\omega}

\newcommand{\Cyl}[1]{\mathsf{C}_{#1}} 
\newcommand{\cylOp}{\cdot}

\newcommand{\cylS}[1]{\mathscr{C}_{#1}}
\newcommand{\muCyl}{\mu_{\cylS{}}}

\newcommand{\twoOm}{\Bool^{\Nat}}

\newcommand{\PM}{\mathscr{P}}

\newcommand{\PPA}{\mathsf{MQPA}}
\newcommand{\groundS}{\mathcal{G}}

\newcommand{\zero}{\mathtt{0}}
\newcommand{\suc}[1]{\mathtt{S}(#1)}
\newcommand{\add}{+}
\newcommand{\mul}{\times}

\newcommand{\fone}{F}
\newcommand{\ftwo}{G}
\newcommand{\fthree}{H}

\newcommand{\BOX}{\mathbf{C}}
\newcommand{\DIA}{\mathbf{D}}

\newcommand{\env}{\xi}
\newcommand{\sem}[2]{\llbracket #1\rrbracket_{#2}}

\newcommand{\IMT}{\mathtt{IMT}}
\newcommand{\RWT}{\mathtt{RW}}

\newcommand{\atm}[1]{\mathsf{FLIP}(#1)}

\newcommand{\Distr}[1]{\mathbb{D}(#1)}

\newcommand{\PR}{\mathcal{PR}}
\newcommand{\QPR}{\mathcal{OR}}

\newcommand{\TM}{\mathcal{M}}

\newcommand{\qtorand}[1]{#1^{\#}}
\newcommand{\Ext}{\textsf{EXT}}

\newcommand{\arithmetical}{arithmetical}
\newcommand{\arithmetization}{arithmetization}



\newcommand{\compby}[1]{\langle #1\rangle}



\longv{
\newtheorem{theorem}{Theorem}
\newtheorem{proposition}{Proposition}
\newtheorem{lemma}{Lemma}
\newtheorem{definition}{Definition}

\newtheorem{example}{Example}

\newenvironment{proof}{\begin{trivlist}
       \item[\hskip \labelsep {\bfseries Proof.}]}{\hfill $\Box$ \end{trivlist}}
}

\newcommand{\comp}{\odot}
\begin{document}
\shortv{
\title{On Measure Quantifiers\\ in First-Order Arithmetic\thanks{Supported by ERC CoG ``DIAPASoN'', GA 818616.}}}

\longv{\title{On Measure Quantifiers in First-Order
Arithmetic\thanks{Supported by ERC CoG ``DIAPASoN'', GA 818616.} \\ (Long Version)}}
\longshortv{
  \author{Melissa Antonelli \and Ugo Dal Lago \and Paolo Pistone}
  \date{}
}{
\titlerunning{On Measure Quantifiers in First-Order Arithmetic}
%
\author{Melissa Antonelli
\and
Ugo Dal Lago
\and
Paolo Pistone 
}

\authorrunning{M. Antonelli et al.}
%
\institute{University of Bologna, Bologna, Italy \\
\email{\{melissa.antonelli2,ugo.dallago,paolo.pistone2\}@unibo.it}
}
}

\maketitle        
\begin{abstract}
  We study the logic obtained by endowing the language of first-order
  arithmetic with second-order measure quantifiers.  This new kind of
  quantification
  allows us to express that the argument formula is true \emph{in
    a certain portion} of all possible interpretations of the
  quantified variable.  We show that first-order arithmetic with
  measure quantifiers is capable of 
  formalizing simple results from
  probability theory and, most importantly, of
  representing every
  recursive random function. Moreover, we introduce a realizability
  interpretation of this logic in which programs have access to an
  oracle from the Cantor space.
\end{abstract}

\shortv{\keywords{Probabilistic Computation \and Peano Arithmetic \and Realizability}}

\section{Introduction}

The interactions between first-order arithmetic and the theory of computation  
are plentiful and deep.
On the one side, proof systems for arithmetic can be used to
prove termination of certain classes of algorithms~\cite{SorUrz}, or
to establish complexity bounds~\cite{Buss86}.
On the other,
higher-order programming languages, such as typed $\lambda$-calculi,
can be proved to capture the computational content of arithmetical
proofs.  These insights can be pushed further, giving rise to logical
and type theories of various strengths.  
Remarkably, all the quoted research directions rely
on the tight connection between the concepts of \emph{totality} (of
functions) and \emph{termination} (of algorithms). 

However, there is one side of the theory of computation which
was only marginally touched by this fruitful interaction, that is, randomized
computation. 
Probabilistic models 
have been widely investigated and
are nowadays pervasive in many
areas of computer science.
The idea of relaxing the notion of
algorithm to account for computations involving random decisions
appeared early in the history of modern computability theory and studies on probabilistic computation have been developed
since the 1950s and 1960s\shortv{~\cite{LMSS}}\longv{~\cite{Carlyle,LMSS,Davis61,Santos68,Santos69b,Simon81}}.
Today several formal models
are available, such as probabilistic automata\shortv{~\cite{Rabin63}}\longv{~\cite{Segala95,Rabin63}}, \longv{both Markovian and oracle }probabilistic Turing machines (from now on,
PTMs)\shortv{~\cite{Santos69,Gill74}}\longv{~\cite{Davis61,Santos69,Gill74,Gill77}}, and
probabilistic
$\lambda$-calculi\shortv{~\cite{SahebDjaromi}}\longv{~\cite{SahebDjaromi,JonesPlotkin}. 
At this point randomized computation is ubiquitous}. 

In
probabilistic computation, behavioral properties, such as
termination, have a \emph{quantitative} nature: 
any computation terminates \emph{with a given probability}. Can such quantitative properties be studied within a logical system? 
Of course, logical systems for set-theory and second-order logic can be expressive enough 
to represent measure theory~\cite{Simpson} and, thus, are inherently capable of talking about randomized
computations. 
Yet, what should one add to \emph{first-order} arithmetic to make it capable of describing probabilistic computation?

In this paper we provide an answer to this question by introducing a 
%
%
%
 somehow \emph{minimal} extension of
first-order Peano Arithmetic by means of measure quantifiers. We will call this system $\PPA$. 
Its language is obtained by enriching the language of $\PA$ with a special unary predicate, $\atm \cdot$, whose interpretation is an element of the Cantor space, 
$\{0,1\}^\Nat$, 
and with measure-quantified formulas, such as $\BOX^{\frac{1}{2}}\fone$, which expresses the fact that $\fone$ has probability $\geq \frac{1}{2}$ of being true (that is, the subset of 
$\{0,1\}^\Nat$ 
which makes $A$ true has measure $\geq \frac{1}{2}$).
The appeal to the Cantor space is essential here, since
there is no \emph{a priori} bound on the amount of random bits a given 
computation might need; at the same time, we show that it yields a very natural measure-theoretic semantics.

The rest of this paper is structured as follows.  In Section~\ref{Section1} we introduce the
syntax and semantics of $\PPA$.
Section~\ref{monkey} shows that some non-trivial results in probability theory can be naturally expressed in $\PPA$. In Section~\ref{Arithmetization} 
we establish our main result, that is, a representation theorem within $\PPA$ for random functions computed by PTMs, which is the 
probabilistic analogous to G\"odel's arithmetization theorem for recursive functions in $\PA$~\cite{Godel}. Finally, in Section~\ref{Realizability}, a realizability interpretation for $\PPA$ in terms of computable functions with oracles on the Cantor space is presented.
\shortv{For further details, 
an extended version of this work is available, see~\cite{longVersion}.}

\section{Measure-Quantified Peano Arithmetic}\label{Section1}

This section is devoted to the introduction 
of the syntax and semantics for formulas of $\PPA$. 
Before the actual presentation, 
we need some (very modest) preliminaries
from measure theory.

\paragraph{Preliminaries.}
The standard model $(\Nat, +, \times)$ has nothing probabilistic in itself. 
Nevertheless, it can be naturally extended into a probability space:
 arithmetic being discrete, one may consider the underlying
sample space as just $\Bool^{\Nat}$, namely the set of all infinite
sequences of elements from $\Bool$ = $\{0,1\}$. We will use
metavariables, such as $\omega_{1}, \omega_{2}, \dots$, for the elements
of $\Bool^{\Nat}$.  
As it is known,
there are standard ways of building a well behaved
$\sigma$-algebra and a probability space on $\Bool^{\Nat}$, which we will briefly recall
here.  
The subsets of $\Bool^{\Nat}$ of the form
	$$
	\Cyl{X} = \{s \cylOp \seq \ | \ s \in X \ \& \ \seq \in \Bool^{\Nat}\},
	$$
where $X \subseteq \Bool^{n}$ and $\cdot$
denotes sequence concatenation,
are called \emph{$n$-cylinders}~\cite{Billingsley}.
Specifically, we are interested in $X$s defined
as follows:
$X_{n}^{b}=\{s\cdot b \ | \ s \in \Bool^{n} \ \&  \ b \in \Bool\} \subseteq \Bool^{n+1}$, with $n\in \Nat$. 
We will often deal with cylinders of the form
$\Cyl{X^{1}_{n}}$.
We let $\cylS{n}$ and $\cylS{}$ indicate the set of all $n$-cylinders and the corresponding algebra, made of
the open sets of the natural topology on
$\Bool^{\Nat}$. 
The smallest $\sigma$-algebra
including $\cylS{}$, which is Borel, is indicated as $\sigma(\cylS{})$.  
There is a natural way of defining a probability measure $\muCyl$ on $\cylS{}$, 
namely by assigning to $\Cyl{X}$ the measure $\frac{|X|}{2^{n}}$. 
There exists canonical ways to extend this
to $\sigma(\cylS{})$.
In doing so, the standard model ($\Nat, +, \times$) 
can be generalized to
$\PM$ = $(\Nat, +, \times, \sigma(\cylS{}), \muCyl)$, which will be our standard model for $\PPA$.\footnote{Here, we will focus on this structure as a ``standard model'' of $\PPA$, leaving 
the study of alternative models for future work.} 
	When interpreting sequences in $\Bool^{\Nat}$ as infinite
supplies of random bits, the set of sequences such that the $k$-th coin flip's 
result is $1$ (for any fixed $k$) is assigned measure $\frac{1}{2}$,
meaning that each random bit is uniformly distributed and independent from the others.
\paragraph{Syntax.}
We now introduce the syntax of $\PPA$. 
Terms are defined as in classic first-order arithmetic. 
Instead, the formulas of $\PPA$ are obtained by endowing
the language of $\PA$ with \emph{flipcoin
formulas}, such as $\atm{t}$, and \emph{measure-quantified formulas},
as for example
$\BOX^{t/s}F$ and $\DIA^{t/s}F$.
Specifically, $\atm{\cdot}$ is a special unary predicate with an intuitive computational meaning.
It basically provides an infinite supply of independently
and randomly distributed bits. 
Intuitively, given a closed term $t$,
$\atm{t}$ 
holds if and only if the $n$-th tossing returns 1, 
where $n$ is the denotation of
$t+1$.
\begin{definition}[Terms and Formulas of $\PPA$]
Let $\groundS$ be a denumerable set of \emph{ground variables},
whose elements are indicated by metavariables such as $x,
y.$ The \emph{terms of $\PPA$}, denoted by $t, s$, are defined
as follows:
$$
t, s := x \mid \zero \mid \suc{t} \mid t \add s \mid t \mul s.
$$
The \emph{formulas of $\PPA$} are defined by the following grammar:
$$
\fone, \ftwo := \atm{t} \mid (t=s) \mid \neg \fone \mid \fone \vee \ftwo \mid \fone \wedge \ftwo \ \mid \exists x.\fone \mid \forall x. \fone \mid \BOX^{t/s}\fone \mid \DIA^{t/s}\fone.
$$ 
\end{definition}
\paragraph{Semantics.}
Given an environment $\env: \groundS\to \Nat$,
the interpretation $\sem{t}{\env}$ of a term $t$ is defined as usual.
%
%
\longv{
\begin{definition}[Semantics for Terms of $\PPA$]
An \emph{environment} $\env$ is a mapping that assigns to each ground
variable a natural number, $\env : \groundS \rightarrow \Nat$. Given a
term $t$ and an environment $\env$, the \emph{interpretation of $t$ in
$\env$} is the natural number $\sem{t}{\env} \in \Nat$, inductively
defined as follows:
\par
\begin{minipage}{\linewidth}
\begin{minipage}[t]{0.4\linewidth}
\begin{align*}
\sem{x}{\env} &:= \env(x) \in \Nat \\
\sem{\zero}{\env} &:= 0 \\
\sem{\suc{t}}{\env} &:= \sem{t}{\env} + 1
\end{align*}
\end{minipage}
\hfill
\begin{minipage}[t]{0.55\linewidth}
\begin{align*}
\sem{t \add s}{\env} &:= \sem{t}{\env} + \sem{s}{\env} \\
\sem{t \mul s}{\env} &:= \sem{t}{\env} \times \sem{s}{\env}
\end{align*}
\end{minipage}
\end{minipage}
\normalsize
\end{definition}}
%
%
Instead, the interpretation of formulas requires a little
care, being it inherently \emph{quantitative}: any formula $\fone$ is associated with a \emph{measurable} set, $\sem{\fone}{\env}\in \sigma(\cylS{})$ (similarly to e.g.~\cite{MSM}). 
\begin{definition}[Semantics for Formulas of $\PPA$]\label{Semantics}
Given a formula $\fone$ and an environment $\env$,
the \emph{interpretation of $\fone$ in $\env$} is
the {measurable} set of sequences
$\sem{\fone}{\env} \in \sigma(\cylS{})$ inductively defined as follows:
\par
\begin{minipage}{\linewidth}
\begin{minipage}[t]{0.4\linewidth}
\begin{align*}
\sem{\atm{t}}{\env} &:= \Cyl{X^{1}_{\sem{t}{\env}}} \\
\sem{t = s}{\env} &:= \begin{cases} \twoOm \ \ &\text{ if } \sem{t}{\env} = \sem{s}{\env} \\
\emptyset \ \ &\text{ otherwise} \end{cases} \\
\sem{\neg \ftwo}{\env} &:= \twoOm \ – \ \sem{\ftwo}{\env} 
\end{align*}
\end{minipage}
\hspace{3mm}
\begin{minipage}[t]{0.55\linewidth}
\begin{align*}
\sem{\ftwo \vee \fthree}{\env} &:= \sem{\ftwo}{\env} \cup \sem{\fthree}{\env} \\
\sem{\ftwo \wedge \fthree}{\env} &:= \sem{\ftwo}{\env} \cap \sem{\fthree}{\env} \\
\sem{\exists x.\ftwo}{\env} &:= \bigcup_{i \in \Nat} \sem{\ftwo}{\env\{x \leftarrow i\}} \\
\sem{\forall x.\ftwo}{\env} &:= \bigcap_{i \in \Nat}\sem{\ftwo}{\env\{x \leftarrow i\}}
\end{align*}
\end{minipage}
\end{minipage}

\begin{align*}
\sem{\BOX^{t/s}\ftwo}{\env} &:= \begin{cases} \twoOm \ \ &\text{ if } \sem{s}{\env}> 0 \text{ and } \muCyl(\sem{\ftwo}{\env}) \ge \sem{t}{\env}/{\sem{s}{\env}} \\ \emptyset \ \ &\text{ otherwise} 
\end{cases} \\
\sem{\DIA^{t/s}\ftwo}{\env} &:= \begin{cases} \twoOm \ \ &\text{ if } \sem{s}{\env}= 0 \text{ or } \muCyl({\sem{\ftwo}{\env}}) < \sem{t}{\env}/{\sem{s}{\env}}\\
\emptyset \ \ &\text{ otherwise} \end{cases}
\end{align*}
\end{definition}
The semantics is well-defined since the sets $\sem{\atm{t}}{\env}$ and
$\sem{t=s}{\env}$ are measurable, and measurability is preserved
by all the logical operators.
It is not difficult to see that any $n$-cylinder can be captured as the interpretation of some $\PPA$ formula.
However, the language of $\PPA$ allows us to express more and more complex measurable sets, as illustrated in the next sections.

The notions of validity and logical equivalence are defined
in a standard way.
\longv{
\begin{definition}}
A formula of $\PPA$, $\fone$, is \emph{valid} if and only if for every $\env$, $\sem{\fone}{\env} = \twoOm$. Two $\PPA$ formulas $\fone, \ftwo$ are \emph{logically equivalent} $\fone \equiv \ftwo$ if and only if for every $\env$, $\sem{\fone}{\env} = \sem{\ftwo}{\env}$.
\longv{
\end{definition}}
 Notably, the two measure quantifiers are inter-definable, since one has 
 $\sem{\BOX^{t/s}\fone}{\xi}= \sem{\neg\DIA^{t/s}\fone}{\xi}$.
%
\longv{
\begin{lemma}
For every formula of $\PPA$, call it $\fone$:
$$
\BOX^{t/s}\fone \equiv \neg \DIA^{t/s}\fone.
$$
\end{lemma}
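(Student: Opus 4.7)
The plan is to unfold the definitions on both sides and verify that the set-theoretic conditions match up on every environment $\env$. Since the interpretation of any formula is either $\twoOm$ or $\emptyset$ when the outermost connective is a measure quantifier (and the same holds after prefixing by $\neg$, because $\twoOm - \twoOm = \emptyset$ and $\twoOm - \emptyset = \twoOm$), it suffices to check that $\sem{\BOX^{t/s}\fone}{\env} = \twoOm$ exactly when $\sem{\neg\DIA^{t/s}\fone}{\env} = \twoOm$.

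First I would fix an arbitrary environment $\env$ and compute $\sem{\neg\DIA^{t/s}\fone}{\env} = \twoOm - \sem{\DIA^{t/s}\fone}{\env}$ directly from the semantic clause for negation. Then I would unfold the clause for $\DIA^{t/s}$ and take the set-theoretic complement of the two cases: the case where $\sem{\DIA^{t/s}\fone}{\env}=\twoOm$ collapses to $\emptyset$, and the case where it is $\emptyset$ collapses to $\twoOm$. Thus $\sem{\neg\DIA^{t/s}\fone}{\env}$ equals $\twoOm$ precisely when it is \emph{not} the case that $\sem{s}{\env}=0$ or $\muCyl(\sem{\fone}{\env}) < \sem{t}{\env}/\sem{s}{\env}$, i.e., precisely when $\sem{s}{\env} > 0$ \emph{and} $\muCyl(\sem{\fone}{\env}) \ge \sem{t}{\env}/\sem{s}{\env}$.

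This last condition is, verbatim, the condition under which $\sem{\BOX^{t/s}\fone}{\env} = \twoOm$ according to Definition~\ref{Semantics}. Hence the two interpretations coincide for every $\env$, which is exactly the definition of logical equivalence. The only mildly delicate point is being careful with the disjunction/conjunction duality when complementing the $\DIA$-clause (the case split on $\sem{s}{\env}=0$ must be handled symmetrically), but no computation beyond a truth-table check is required, so this is essentially a one-line verification once the definitions are unfolded.
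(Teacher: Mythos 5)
Your proposal is correct and follows essentially the same route as the paper's own proof: unfold the semantic clause for negation, complement the two-case clause for $\DIA^{t/s}$, and observe that the resulting condition is verbatim the clause for $\BOX^{t/s}$. If anything, you are slightly more careful than the paper's displayed computation, which silently drops the $\sem{s}{\env}=0$ case when complementing, whereas you handle the disjunction $\sem{s}{\env}=0 \,\lor\, \muCyl(\sem{\fone}{\env}) < \sem{t}{\env}/\sem{s}{\env}$ explicitly.
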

\begin{proof}
The proof is based on Definition \ref{Semantics},
\begin{align*}
\sem{\neg \DIA^{t/s} \fone}{\env} &= \twoOm \ - \sem{\DIA^{t/s} \fone}{\env} \\
&= \twoOm - \begin{cases}
\twoOm \ \ \ &\text{ if } \muCyl(\sem{\fone}{\env}) < \sem{t}{\env}/\sem{s}{\env} \\
\emptyset \ \ \ &\text{ otherwise}
\end{cases} \\
&= \begin{cases}
\emptyset \ \ \ &\text{ if } \muCyl(\sem{\fone}{\env}) < \sem{t}{\env}/\sem{s}{\env} \\
\twoOm \ \ \ &\text{ otherwise}
\end{cases} \\
&= \sem{\BOX^{t/s} \fone}{\env}.
\end{align*}
\end{proof}
}

The following examples illustrate the use of measure-quantifiers $\BOX^{t/s}$ and $\DIA^{t/s}$ and, in particular, the role of probabilities of the form $\frac{t}{s}$.
\begin{example}
The formula $\fone = \BOX^{1/1}\exists x.\atm x$ states that a true random bit will almost surely be met. It is valid, as the set of constantly 0 sequences forms a singleton, which has measure 0.
\end{example}

\begin{example}
The formula\footnote{For the sake of readability, $\fone$ has been written with a little abuse of notation the actual $\PPA$ formula being $\forall x.\BOX^{1/z}( \mathrm{EXP}(z,x) \land \forall{y}.( \exists w.(y+w=x)\to    \atm{y})$, where $\mathrm{EXP}(z,x)$ is an arithmetical formula expressing $z=2^{x}$ and 
$\exists w.y+w=x$ expresses $y\leq x$.
}
 $\fone = \forall x.\BOX^{1/2^{x}}\forall_{y \leq x}.\atm{y}$
  states that the probability for the first $x$ random bits to be true is at least $\frac{1}{2^{x}}$. This formula is valid too.
\end{example}

\section{On the Expressive Power of $\PPA$}\label{monkey}

As anticipated, the language of $\PPA$ allows us
to express some elementary results from probability theory, and to check their validity in the structure $\PM$. In this section we sketch a couple of examples.

\paragraph{The Infinite Monkey Theorem.}
Our first example is the so-called \emph{infinite monkey theorem} ($\IMT$). 
It is a classic result from probability theory stating that a monkey randomly typing on a keyboard has probability 1 of ending up writing the \emph{Macbeth} (or any other fixed string), sooner or later.
Let the formulas $\fone(x,y)$ and $\ftwo(x,y)$ of $\PA$ express, respectively, that ``$y$ is strictly smaller than the length of (the binary sequence coded by) $x$'', and that ``the $y$+1-th bit of $x$ is 1''. We can formalize $\IMT$ through the following formula:
$$
\fone_{\IMT} : \forall x.\BOX^{1/1}\forall y.\exists z.\forall w.\fone(x, w) \rightarrow ( \ftwo(x,w) \leftrightarrow \atm{y+z+w}).
$$
Indeed, let $x$ be a binary encoding of the \emph{Macbeth}. The formula $\fone_{\IMT}$ says then that for all choice of start time $y$, there exists a time $y+z$ after which $\atm \cdot$ will evolve exactly like $x$ with probability 1.

How can we justify $\fone_{\IMT}$ using the semantics of $\PPA$? Let $\varphi(x,y,z,w)$ indicate the formula 
$\fone(x, w) \rightarrow ( \ftwo(x,w) \leftrightarrow \atm{y+z+w})$. We must show that for all natural number $n\in \mathbb N$, there exists a measurable set $S^{n} \subseteq \Bool^{\Nat}$ of measure 1 such that any sequence in $S^{n}$ satisfies 
the formula $\forall y.\exists z.\forall w.\varphi(n,y,z,w)$.
To prove this fact, we rely on a well-known result from measure theory, namely the \emph{second Borel-Cantelli Lemma}:
\begin{theorem}[\cite{Billingsley}, Thm.~4.4, p.~55]
\label{lemma:borelcantelli}
If $(U_{y})_{y\in \mathbb N}$ is a sequence of independent events in $\Bool^{\Nat}$, and $\sum^{\infty}_{y}\muCyl(U_{y})$ diverges, then  
$\muCyl\left ( \bigcap_{y}\bigcup_{z>y}U_{z}\right)=1$.
\end{theorem}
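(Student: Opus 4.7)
The plan is to pass to the complement. Writing $L = \bigcap_y \bigcup_{z>y} U_z$ for the limsup event, one has $L^c = \bigcup_y \bigcap_{z>y} U_z^c$, so by countable subadditivity it suffices to prove that for every fixed $y \in \Nat$ the set $V_y = \bigcap_{z>y} U_z^c$ has $\muCyl$-measure zero.

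First I would use the independence hypothesis: since complementation preserves independence of the generated sub-$\sigma$-algebras, the family $(U_z^c)_{z>y}$ is still independent, so for every $N > y$,
\[
\muCyl\Bigl(\bigcap_{z=y+1}^{N} U_z^c\Bigr) \;=\; \prod_{z=y+1}^{N}\bigl(1 - \muCyl(U_z)\bigr).
\]
Next I would apply the elementary inequality $1 - x \leq e^{-x}$, valid for $x \in [0,1]$, to bound the right-hand side by $\exp\bigl(-\sum_{z=y+1}^{N} \muCyl(U_z)\bigr)$. By hypothesis $\sum_z \muCyl(U_z)$ diverges, hence so does its tail from index $y+1$, and the exponential bound tends to $0$ as $N \to \infty$.

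Finally, continuity of $\muCyl$ from above along the decreasing chain $\bigcap_{z=y+1}^{N} U_z^c \downarrow V_y$ transfers the limit: $\muCyl(V_y) = \lim_{N\to\infty} \muCyl\bigl(\bigcap_{z=y+1}^{N} U_z^c\bigr) = 0$. Countable subadditivity then yields $\muCyl(L^c) \leq \sum_{y} \muCyl(V_y) = 0$, whence $\muCyl(L) = 1$, as desired.

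The step I expect to require the most care is the first one: independence for finitely many events follows directly from the definition, but the passage to the countably infinite intersection appearing in $V_y$ is not immediate. In the outline above this passage is handled precisely by the continuity-of-measure argument, which converts the exact product formula for finite initial intersections into an asymptotic vanishing statement for $V_y$. Everything else is routine: the bound $1-x \leq e^{-x}$ and the translation between divergence of $\sum_z \muCyl(U_z)$ and the series appearing in the exponent.
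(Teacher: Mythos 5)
Your argument is correct: it is the classical proof of the second Borel--Cantelli lemma (complementation, the product formula from independence, the bound $1-x\le e^{-x}$, continuity of the finite measure $\muCyl$ from above, and countable subadditivity), and each step, including the delicate passage from finite intersections to $\bigcap_{z>y}U_z^c$, is handled properly. Note that the paper itself does not prove this statement at all --- it is invoked as a known result, cited from Billingsley --- and your proof is essentially the standard one found in that reference, so there is nothing to reconcile.
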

Let us fix $n\in \mathbb N$ and let $\ell(n)$ indicate the length of the binary string encoded by $n$.
 We suppose for simplicity that $\ell(n)>0$ (as the case $\ell(n)=0$ is trivial). 
We construct $S^{n}$ in a few steps as follows:
\begin{varitemize}
\item for all $p\in \mathbb N$, let $U_{p}^{n}$ be the cylinder of sequences which, after $p$ steps, agree with $n$; observe that the sequences in $ U_{p}^{n}$ satisfy the formula $\forall w.\varphi(n,p,0,w)$;
\item for all $p\in \mathbb N$, let $V_{p}^{n}=U^{n}_{ p\cdot \ell(n)+1}$; observe that the sets $V^{n}_{p}$ are pairwise independent and $\muCyl(\sum_{p}^{\infty}V^{n}_{p})=\infty$;
%
%

\item for all $p\in \mathbb N$, let $S^{n}_{p}=\bigcup\{ U^{n}_{p+q}\mid \exists _{s>p}. p+q= s\cdot \ell(n)+1\}$. 
Observe that any sequence in $ S^{n}_{p}$ satisfies $\exists z.\forall w.\varphi(n,p,z,w)$;
Moreover, one can check that $S^{n}_{p}=\bigcup_{q>p}V^{n}_{q}$;
\item we finally let $ S^{n}:=\bigcap_{p}S^{n}_{p}$.
\end{varitemize}
We now have that any sequence in $S^{n}$ satisfies $\forall y.\exists z.\forall w.\varphi(n,y,z,w)$; furthermore, by Theorem  \ref{lemma:borelcantelli}, $\muCyl(S^{n})=\muCyl(\bigcap_{p}\bigcup_{q>p}V^{n}_{q})=1$. 
Thus, for each choice of $n\in \mathbb N$,  
$\muCyl$$(\sem{\forall y.\exists z.\forall w.\varphi(x,p,z,w)}{\{x\leftarrow n\}})$ $\ge$ 
$\muCyl(S^{n}) \ge 1$, 
and we conclude that $\sem{F_{\IMT}}{\env}=\Bool^{\Nat}$.

\paragraph{The Random Walk Theorem.}
A second example we consider is the \emph{random walk theorem} ($\RWT$): any simple random walk over $\mathbb{Z}$ starting from 1 will pass through 1 infinitely many times with probability 1. 
More formally, any $\omega \in \Bool^{\Nat}$ induces a simple random walk starting from 1, by letting the $n$-th move be right if $\omega(n)=1$ holds and left if $\omega(n)=0$ holds. One has then:

\begin{theorem}[\cite{Billingsley}, Thm.~8.3, p.~117]\label{lemma:borelcantelli2}
Let $U_{ij}^{(n)}\subseteq \Bool^{\Nat}$ be the set of sequences for which the simple random walk starting from $i$ leads to $j$ in $n$ steps. Then $\muCyl\left (\bigcap_{x}\bigcup_{y\geq x} U_{11}^{(y)}\right)=1$.
\end{theorem}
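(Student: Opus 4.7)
The plan is to prove that the walk visits $1$ infinitely often with full measure, exploiting the standard recurrence/transience dichotomy for symmetric random walks on $\mathbb{Z}$. Set $f := \muCyl\bigl(\bigcup_{y\geq 1}U_{11}^{(y)}\bigr)$, the probability of ever returning to $1$ after leaving it at time $0$. Using translation invariance together with the strong Markov property iterated at successive return times, the probability of at least $k$ returns is $f^{k}$, so
$$
\muCyl\Bigl(\bigcap_{x\in\Nat}\bigcup_{y\geq x}U_{11}^{(y)}\Bigr) \;=\; \lim_{k\to\infty}f^{k},
$$
which equals $1$ when $f=1$ and $0$ otherwise. It therefore suffices to show $f=1$.

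To establish $f=1$, I would compute the expected number of visits $N := \sum_{n\geq 1}\mathbf{1}_{U_{11}^{(n)}}$ in two ways. The renewal decomposition induced by the strong Markov property gives $E[N] = \sum_{k\geq 1}f^{k} = f/(1-f)$ when $f<1$, and $E[N] = \infty$ otherwise. On the other hand, by Fubini,
$$
E[N] \;=\; \sum_{n\geq 1}\muCyl\bigl(U_{11}^{(n)}\bigr).
$$
By parity the walk can return to $1$ only after an even number of steps, and a direct count on $\Bool^{\Nat}$ yields $\muCyl\bigl(U_{11}^{(2n)}\bigr) = \binom{2n}{n}2^{-2n}$, since these are exactly the sequences whose first $2n$ bits contain equally many $0$s and $1$s. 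Stirling's formula then gives $\binom{2n}{n}2^{-2n} \sim 1/\sqrt{\pi n}$, whose series diverges. Hence $E[N]=\infty$, forcing $f=1$ and closing the argument.

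The main obstacle in this plan is that the events $U_{11}^{(y)}$ are \emph{not} independent, so the second Borel--Cantelli lemma used earlier for $\IMT$ cannot be invoked directly; the strong Markov step must instead be justified within the cylinder $\sigma$-algebra $\sigma(\cylS{})$. Concretely, one conditions on the first return time $\tau$, splits each coin-flip sequence into its prefix up to $\tau$ and the subsequent tail, and exploits the i.i.d.\ structure of $\muCyl$ to show that the tail is distributed as an independent fresh copy of the whole sequence. Once this measure-theoretic decomposition is in place, the combinatorial estimate on $\binom{2n}{n}2^{-2n}$ is entirely routine.
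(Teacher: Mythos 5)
This statement is not proved in the paper at all: it is imported directly from Billingsley (Thm.~8.3, p.~117), so there is no internal proof to compare against. Your argument is correct and is essentially the classical proof found in that cited source — the first-return/strong-Markov decomposition giving $f^{k}$ for $k$ returns, combined with the divergence of $\sum_{n}\binom{2n}{n}2^{-2n}$ to force $f=1$ — with the only point needing care being the measure-theoretic justification of the Markov step inside $\sigma(\cylS{})$, which you correctly identify and sketch adequately.
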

Similarly, the random predicate $\atm n$ induces a simple random walk starting from 1, by letting the $n$-th move be right if $\atm n$ holds and left if $\neg \atm n$ holds.  To formalize $\RWT$ in $\PPA$ we make use two arithmetical formulas:
\begin{varitemize}
\itemsep0em
 \item $H(y,z)$ expresses that $y$ is even and $z$ is the code of a sequence of length $\frac{y}{2}$, such that for all $i,j < \frac{y}{2}$, $z_{i} < y$,
 and $z_{i} = z_{j} \Rightarrow i = j$ 
(that is, $z$ codes a subset of
 $\{0, \dots, y–1\}$ of cardinality
 $\frac{y}{2}$);
 \item $K(y,z,v) =H(y,z)\land \exists i.i < \frac{y}{2} \wedge z_{i}=v$.
\end{varitemize}
The formula of $\PPA$ expressing $\RWT$ is as follows: 
$$
F_{\RWT} : \BOX^{1/1}\forall x.\exists y.\exists z. y\ge x  \wedge  H(y,z)  \wedge  \forall v.\Big(v<y \rightarrow \big(K(y,z,v) \leftrightarrow \atm v\big)\Big).
$$
\normalsize
$F_{\RWT}$ basically says that for any fixed $x$, we can find $y \ge x$ and a subset $z$ of
$\{0, \dots, y-1\}$ of cardinality
$\frac{y}{2}$, containing all and only the values $v<y$ such that $\atm v$ holds (so that the number of $v<y$ such that $\atm v$ holds coincides with the number of $v < y$ such that $\neg \atm v$ holds). This is the case precisely when the simple random walk goes back to 1 after exactly $y $ steps.

To show the validity of $F_{\RWT}$ we can use the measurable set $S=\bigcap_{n}\bigcup_{p\geq n}U_{11}^{(p)}$. Let $\psi(y,z,v)$ be the formula $(v<y \rightarrow (K(y,z,v) \leftrightarrow \atm v))$. 
Observe that any sequence in $U_{11}^{(n)}$ satisfies the formula $\exists z. H(n,z)\land \forall v.\psi(y,z,v,w)$.
Then, any sequence in $S$ satisfies the formula
$\forall x.\exists y.\exists z. y\geq x\land  H(y,z)\land \forall v.\psi(y,z,v)$. Since, by Theorem \ref{lemma:borelcantelli2}, $\muCyl(S)=1$, we conclude that 
$\muCyl(\sem{F_{\RWT}}{\env})\geq \muCyl(S)\geq 1$, and thus that $\sem{F_{\RWT}}{\env}=\Bool^{\Nat}$.

\section{Arithmetization}\label{Arithmetization}

It is a classical result in computability theory\shortv{~\cite{Godel,SorUrz}}\longv{~\cite{Godel,Godel65,Godel92,Smith,SorUrz}} that all computable
functions are arithmetical, that is, 
for each partial recursive
function $f : \Nat^{m} \rightharpoonup \Nat$ there is an arithmetical formula 
$\fone_{f}$, such that for every $n_{1}, \dots, n_{m}, l \in \Nat$:
$f(n_{1}, \dots, n_{m}) = l$ $\Leftrightarrow$ $(\Nat, +, \times) \vDash \fone_{f}(n_{1}, \dots, n_{m}, l)$.
In this section we show that, by considering {arithmetical formulas of $\PPA$}, this fundamental result can be generalized to computable \emph{random} functions.

\paragraph{Computability in Presence of Probabilistic Choice.}
Although standard computational models are built around determinism,
from the 1950s on, models for \emph{randomized} computation started to receive
wide attention\shortv{~\cite{LMSS,Santos69,Gill74}}\longv{~\cite{LMSS,Davis61,Rabin63,Santos69,Santos71,Gill74,Gill77,Simon81}}.
The first formal definitions 
of probabilistic Turing machines (for short, PTM) are due
to Santos\shortv{~\cite{Santos69}}\longv{~\cite{Santos69,Santos71}} and Gill~\cite{Gill77,Gill74}.
Roughly, a PTM is an ordinary Turing machine (for short, TM) with the additional capability
of making random decisions.
\longv{Two alternative paradigms have been developed in the literature, the so-called \emph{Markovian} and \emph{oracle} PTMs.}
Here, we consider the definition by Gill, in which 
the probabilistic choices performed by the machines are binary and
fair. 
\begin{definition}[Probabilistic Turing Machine~\cite{Gill74,Gill77}]
 A (one-tape) \emph{probabilistic Turing machine} is a 5-tuple $(Q, \Sigma, \delta, q_{0}, Q_{f})$, whose elements are defined as in a standard TM, except for the
 probabilistic transition function $\delta$,
 which, given the current (non-final) state and symbol, specifies two equally-likely transition steps.
\end{definition}
As any ordinary TM computes a partial function
on natural numbers, PTM can be seen as computing a
so-called \emph{random function}~\cite[pp. 706--707]{Santos69}.
Let $\Distr{\Nat}$ indicate the set of \emph{pseudo-distributions} on $\Nat$, i.e.~of functions $f:\Nat \to \mathbb R_{[0,1]}$, such that $\sum_{n\in \Nat}f(n)\leq 1$. Given a PTM, $\TM$,
a random function is a function $\compby{\TM}:\Nat\rightarrow\Distr{\Nat}$ which, for each natural
number $n$, returns a pseudo-distribution supporting all
the possible outcomes $\TM$
produces when fed with (an encoding of) $n$ in input,
each with its own probability.\longv{\footnote{The seminal definition of \emph{random function} 
already appeared in~\cite{Santos69}:
\begin{quote}
\textsc{Definition.} A $k$-ary random function $\phi$ is a function from $E^{n+1}$, the collection of all $(k+1)$-tuples of nonnegative integers, to [0,1] satisfying
$$
\sum_{m=0}^{\infty}\phi(m_{1}, m_{2}, \cdots, m_{k}, m) \leq 1
$$
for every $k$-tuple ($m_{1}, m_{2}, \dots, m_{k})$. \cite[pp. 706-707]{Santos69}
\end{quote}
Remarkably, in the same paper, Santos also delineated the notion of \emph{probabilistic transition function}, on which the Markovian paradigm of PTM is based~\cite[p. 705]{Santos69}~\cite[p. 170]{Santos71}.}} As expected, the random function $f:\Nat\rightarrow\Distr{\Nat}$
is said to be \emph{computable} 
when there is a PTM, $\TM$,
such that $\compby{\TM}=f$.

\longv{
Another widespread 
definition of TM is the one based on the
notion of oracle.
An \emph{oracle TM} 
is a pair consisting of a standard deterministic TM
and a random-bit oracle 
(for further details, see for example~\cite{DK}).  
The random-bit oracle takes the form of an \emph{oracle tape}, which is consulted whenever a coin-tossing state is encountered.
The two definitions are assumed to be equivalent but, to the best of the authors' knowledge, no formal proof
of this equivalence has been presented in the literature yet.
}

\paragraph{Stating the Main Result.}
In order to generalize G\"odel's arithmetization of partial recursive functions to the
class of computable random functions, we \shortv{first }\longv{start by}
introduc\shortv{e}\longv{ing} 
the notion of
arithmetical random function.

\begin{definition}[Arithmetical Random Function]\label{arithmetical}
A random function $f : \Nat^{m}\rightarrow\Distr{\Nat}$
is said to be \emph{\arithmetical} if and only if 
there is a formula of $\PPA$,
call it $F_{f}$, with free variables $x_{1}, \dots, x_{m},y$, such that for every
$n_{1}, \dots, n_{m},l\in \Nat$, it holds that:
\begin{align}
\muCyl\big(\sem{F_f(n_{1}, \dots, n_{m},l)}{}\big)=f(n_{1}, \dots, n_{m})(l).
\end{align}
\end{definition}
The \emph{\arithmetization~theorem} below relates random functions and the formulas of $\PPA$, and is the main result of this paper.
\begin{theorem}\label{theorem:arithmetization}
All computable random functions are \arithmetical.
\end{theorem}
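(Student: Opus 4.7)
The plan is to represent the computation of a PTM inside $\PPA$ by viewing the machine as a deterministic oracle device whose oracle tape is drawn from the Cantor space. Fix a PTM $\TM$ with $\compby{\TM}=f$. Viewed this way, the behavior of $\TM$ on input $\vec n$ becomes a function of a single $\seq\in\twoOm$: whenever $\TM$ needs to flip a coin, it consumes the next unused bit of $\seq$. The probability $f(\vec n)(l)$ then coincides with the $\muCyl$-measure of the set of oracles $\seq$ for which this deterministic oracle machine halts with output $l$, and this set is a countable disjoint union of cylinders, one per distinct halting trace, each of measure $2^{-k}$ where $k$ is the number of flips performed along the trace. What we need is a $\PPA$ formula whose semantic set is exactly this union.

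First I would invoke G\"odel's classical arithmetization. Since the one-step transition of $\TM$ is primitive recursive once the next coin-flip bit is supplied, there is an arithmetical formula $\mathrm{Comp}_{\TM}(\vec x,c)$ of $\PA$ stating that $c$ codes a complete halting computation history of $\TM$ on input $\vec x$, in which each coin-flip step additionally records the bit used. Alongside $\mathrm{Comp}_{\TM}$ one constructs in $\PA$ the arithmetical predicates $\mathrm{Out}(c,y)$ (``the final configuration of $c$ outputs $y$''), $\mathrm{Bit}(c,i)$ (``the $i$-th recorded coin flip of $c$ is $1$''), and a term $r(c)$ giving the number of coin flips in $c$. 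All of this is standard and entirely inside $\PA$.

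Then I would define
\begin{equation*}
F_f(\vec x,y) \;\equiv\; \exists c.\; \mathrm{Comp}_{\TM}(\vec x,c) \,\wedge\, \mathrm{Out}(c,y) \,\wedge\, \forall i.\bigl(i<r(c)\rightarrow (\mathrm{Bit}(c,i)\leftrightarrow \atm{i})\bigr).
\end{equation*}
The last conjunct, which is the only genuinely non-arithmetical ingredient, forces the coin flips recorded in $c$ to match the flipcoin predicate, so that $\seq\in\sem{F_f(\vec n,l)}{}$ iff the (unique by determinism given the oracle) halting trace of $\TM$ on $\vec n$ compatible with $\seq$ outputs $l$. Unwinding Definition~\ref{Semantics}, one sees that $\sem{F_f(\vec n,l)}{}$ is precisely the disjoint union, over halting traces $c$ of $\TM$ on $\vec n$ with output $l$, of the cylinders of sequences whose first $r(c)$ bits agree with the coin-flip prefix of $c$.

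The main obstacle I expect is the verification of the measure identity rather than the construction itself. Three points have to be checked: measurability of $\sem{F_f(\vec n,l)}{}$, which follows from the fact that $\sigma(\cylS{})$ is closed under countable unions and that each purely arithmetical subformula has semantic set in $\{\twoOm,\emptyset\}$; disjointness of the cylinders appearing in the union, which rests on the fact that two distinct halting traces must diverge at some coin-flip step (everything else being deterministic), hence their flip prefixes are incomparable; and $\sigma$-additivity, yielding
\begin{equation*}
\muCyl\bigl(\sem{F_f(\vec n,l)}{}\bigr) \;=\; \sum_{c}2^{-r(c)} \;=\; \compby{\TM}(\vec n)(l) \;=\; f(\vec n)(l),
\end{equation*}
where the first equality is the cylinder measure, the second is the very definition of the random function computed by $\TM$, and the last is the hypothesis $\compby{\TM}=f$.
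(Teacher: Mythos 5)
Your argument is correct in substance, but it follows a genuinely different route from the paper. The paper explicitly avoids the direct arithmetization of machine computations, judging it cumbersome, and instead proceeds through function algebras: computable random functions are identified with the class $\PR$ of probabilistic recursive functions (Proposition~\ref{proposition}), each $\PR$ function is shown to arise as $\qtorand{g}$ for some oracle recursive function $g\in\QPR$ (Proposition~\ref{prop2}, whose proof contains the delicate bookkeeping of how oracle bits are distributed independently among component functions), and finally Lemma~\ref{lemma:randomfunction} shows by a G\"odel-style induction on the structure of $\QPR$ functions that every $\qtorand{g}$ is arithmetical, the only new base case being the query function, represented via $\atm\cdot$. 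You instead arithmetize a probabilistic Turing machine directly: you recast it as a deterministic oracle machine, encode halting computation histories together with their recorded coin flips by a classical $\PA$ formula $\mathrm{Comp}_{\TM}$, and use the conjunct $\forall i<r(c).(\mathrm{Bit}(c,i)\leftrightarrow\atm i)$ to tie the recorded flips to the oracle, so that $\sem{F_f(\vec n,l)}{}$ becomes a disjoint countable union of cylinders of measures $2^{-r(c)}$, summing to $\compby{\TM}(\vec n)(l)$. The measure-theoretic core of your argument (incomparability of flip prefixes of distinct halting traces, $\sigma$-additivity) is exactly the single-machine instance of what the paper packages into Proposition~\ref{prop2}, and your identification of the branching probability of $\TM$ with the oracle measure is the point the paper treats with some care; for one fixed machine it is routine, as you indicate. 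What your approach buys is directness and a single self-contained formula (moreover essentially $\Sigma^0_1$, matching the paper's announced strengthening), at the price of redoing G\"odel's encoding of computations for an oracle-consuming machine; what the paper's approach buys is modularity, a reusable characterization of computable random functions via $\QPR$, and an induction whose cases are verbatim extensions of the classical representability proof.
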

Actually, we establish a stronger fact. Let us call a formula $A$ of $\PPA$ $\Sigma_{1}^{0}$ if $A$ is equivalent to a formula of the form $\exists x_{1}.\dots.\exists x_{n}.A'$, where $A'$ contains neither first-order nor measure quantifiers. 
Then Theorem~\ref{theorem:arithmetization} can be strengthened by saying that any computable random function is represented (in the sense of Definition~\ref{arithmetical}) by a $\Sigma_{1}^{0}$-formula of $\PPA$.

Moreover, we are confident that a sort of converse of this fact can be established, namely that for any  $\Sigma_{1}^{0}$-formula 
$A(x_{1},\dots, x_{m})$ there exists a computable random relation $r(x_{1},\dots, x_{m})$ (i.e.~a computable random function such that $r(x_{1},\dots,x_{n})(i)=0$ whenever $i\neq 0,1$) such that 
$\muCyl(\sem{A(n_{1},\dots,n_{m})}{})=r(n_{1},\dots, n_{m})(0)$ and 
$\muCyl(\sem{\lnot A(n_{1},\dots,n_{m})}{})=r(n_{1},\dots, n_{m})(1)$. However, we leave this fact and, more generally, the exploration of an \emph{arithmetical hierarchy} of randomized sets and relations, to future work.

Given the conceptual distance existing between TMs and arithmetic, a direct proof of Theorem~\ref{theorem:arithmetization} would be
cumbersome. 
It is thus convenient to look
for an alternative route.

\paragraph{On Function Algebras.}
In\shortv{~\cite{GDLZ}}\longv{~\cite{GDLZ,DalLagoZuppiroli}}, the class $\PR$ of \emph{probabilistic} or \emph{random recursive functions} is defined as a generalization of 
Church and Kleene's standard
one\shortv{~\cite{ChurchKleene,Kleene36}}\longv{~\cite{ChurchKleene,Kleene36,Kleene36c,Kleene43}}. 
$\PR$ is characterized as
the smallest class of functions, which
(i) contains some {basic random functions}, and (ii) is
closed under composition, primitive recursion and minimization.
For
all this to make sense, composition and primitive recursion
are defined following the \emph{monadic} structure of $\Distr{\cdot}$. 
In order to give 
a presentation as straightforward as possible, we preliminarily introduce the notion of \emph{Kleisli extension} of a function with values in $\Distr{\Nat}$.
\begin{definition}[Kleisli Extension]
\longv{Given a function $f : \Nat \rightarrow \Distr{\Nat}$,
its \emph{(simple) Kleisli extension} $f^{\mathbf{K}}
: \Distr{\Nat} \rightarrow \Distr{\Nat}$ is defined as follows:
$$
f^{\mathbf{K}}(d)(n) = \sum_{i\in \Nat}d(i) \cdot f(i)(n).
$$ 
More in general, g}\shortv{G}iven a $k$-ary function $f :
X_{1} \times \dots \times X_{i-1}\times \Nat \times
X_{i+1}\times \dots \times X_{k} \rightarrow \Distr{\Nat}$, its
\emph{$i$-th Kleisli extension} $f^{\mathbf{K}}_i : X_{1} \times \dots\times X_{i-1}\times \Distr{\Nat}\times X_{i+1}\times\dots \times X_{k}
\rightarrow \Distr{\Nat}$ is defined as follows:
$$
f^{\mathbf{K}}_i(x_{1}, \dots, x_{i-1}, d, x_{i+1}, \dots, x_{k})(n)
= \sum_{j \in \Nat}d(j) \cdot f(x_{1}, \dots, x_{i-1},j,x_{i+1},\dots, x_{k})(n).
$$
\end{definition}
The construction at the basis of the $\mathbf{K}$-extension\longv{'s general case} can be applied more than once.\longv{\footnote{For example, let us consider a binary function $f:\Nat\times\Nat\rightarrow\Distr{\Nat}$,
its \emph{total} $\mathbf{K}$-extension is as follows:
\begin{align*}
f_{1}^{\mathbf{K}}(d_{1}, d_{2})(y) &= \sum_{i_{1} \in \Nat}d_{1}(i_{1}) \cdot f^{\mathbf{K}}_2(i_{1}, d_{2})(y) \\
&= \sum_{i_{1} \in \Nat}d_{1}(i_{1}) \cdot \Big(\sum_{i_{2} \in \Nat}d_{2}(i_{2}) \cdot f(i_{1}, i_{2})(y)\Big) \\
&= \sum_{i_{1},i_{2} \in \Nat}f(i_{1}, i_{2})(y) \cdot d_{1}(i_{1}) \cdot d_{2}(i_{2}) \\
&= \sum_{i_{1}, i_{2} \in \Nat}f(i_{1},i_{2})(y) \cdot \prod_{k \in \{1,2\}}d_{k}(i_{k}).
\end{align*}}}
Specifically, given a function $f : \Nat^{k} \rightarrow \Distr{\Nat}$, its \emph{total} $\mathbf{K}$-extension $f^{\mathbf{K}} : \left(\Distr{\Nat}\right)^k \rightarrow \Distr{\Nat}$ is defined as follows:
$$ 
f^{\mathbf{K}}(d_{1}, \dots, d_{k})(n) = \sum_{i_{1}, \dots, i_{k} \in \Nat}f(i_{1}, \dots, i_{k})(n) \cdot \prod_{1\leq j\leq k} d_{j}(i_{j}).
$$
We can now define the class $\PR$ formally as follows:
\begin{definition}[The Class $\PR$~\cite{GDLZ}]
The \emph{class of probabilistic recursive functions}, $\PR$, is the smallest class of probabilistic functions containing:
\begin{varitemize}

\item The \emph{zero function}\longv{$, z : \Nat \rightarrow \Distr{\Nat}$, such that }\shortv{ : }for every $x \in \Nat$, $z(x)(0) = 1$;
\item The \emph{successor function}\longv{, $s : \Nat \rightarrow \Distr{\Nat}$, such that}\shortv{:} for every $x \in \Nat$, $s(x)(x+1) = 1$;
\item The \emph{projection function}\longv{, $\pi_{m}^{n} : \Nat^{n} \rightarrow \Distr{\Nat}$, defined as }\shortv{: }for $1\leq m\leq n$, $\pi_{m}^{n}(x_{1}, \dots, x_{n})(x_{m}) = 1$;
\item The \emph{fair coin function}, $r : \Nat \rightarrow \Distr{\Nat}$, defined as follows:
$$
r(x)(y) = \begin{cases}
\frac{1}{2} \ \ \ &\text{ if  } y = x \\
\frac{1}{2} \ \ \ &\text{ if  } y = x + 1\shortv{;} \longv{\\
0 \ \ \ &\text{ otherwise;}}
\end{cases}
$$
\end{varitemize}
and closed under:
\begin{varitemize}
\itemsep0em
\item
  \emph{Probabilistic composition}. Given $f : \Nat^{n} \rightarrow \Distr{\Nat}$ and $g_{1}, \dots, g_{n} : \Nat^{k} \rightarrow \Distr{\Nat}$, their composition is \longv{a function $f \comp (g_{1}, \dots, g_{n}) : \Nat^{k} \rightarrow \Distr{\Nat}$ }defined as follows:\longv{\footnote{That is,
\begin{align*}
((f\comp (g_{1}, \dots, g_{n}))(\mathtt{x}))(y) &= (f^{\mathbf{K}}(g_{1}(\mathtt{x}), \dots, g_{n}(\mathtt{x})))(y) \\
&= \sum_{i_{1}, \dots, i_{n}}f(i_{1}, \dots, i_{k})(y) \cdot \prod_{1 \leq j \leq n}g_{j}(\mathtt{x})(i_{j}).
\end{align*}
The simplest case, is teat of unary composition which is defined as follows, given $f : \Nat \rightarrow \Distr{\Nat}$ and $g : \Nat \rightarrow \Distr{\Nat}$, the function $h : \Nat \rightarrow \Distr{\Nat}$ obtained by composition from $f$ and $g$ is:
$$
(f\comp g)(x)(y) = f^{\mathbf{K}}(g(x))(y) 
$$
Otherwise said,
$$
((f\comp g)(x))(y) = \sum_{z\in \Nat}g(x)(z)\cdot f(z)(y).
$$}}
$$
(f \comp (g_{1}, \dots, g_{n}))(\mathtt{x}) = f^{\mathbf{K}}(g_{1}(\mathtt{x}), \dots, g_{n}(\mathtt{x}));
$$
\item \emph{Probabilistic primitive recursion.} Given $f : \Nat^{k} \rightarrow \Distr{\Nat}$, and $g : \Nat^{k+2} \rightarrow \Distr{\Nat},$ the function \longv{$h : \Nat^{k+1} \rightarrow \Distr{\Nat}$ }obtained from them by primitive recursion is as follows:
\longv{
\begin{align*}
h(\mathtt{x},0) &= f(x) \\
h(\mathtt{x}, y+1) &= g^{\mathbf{K}}_{k+2}(\mathtt{x}, y, h(\mathtt{x}, y));
\end{align*}}
\shortv{
$$
h(\mathtt{x},0) = f(x) \quad \quad \quad h(\mathtt{x}, y+1) = g^{\mathbf{K}}_{k+2}(\mathtt{x}, y, h(\mathtt{x}, y));
$$}
\item \emph{Probabilistic minimization.} Given $f : \Nat^{k+1} \rightarrow \Distr{\Nat},$ the function \longv{$h : \Nat^{k} \rightarrow \Distr{\Nat}$, }obtained from it by minimization is defined as follows: 
$$
\mu f(\mathtt{x})(y) = f(\mathtt{x},y)(0)\cdot\Big(\prod_{z<y}\Big(\sum_{k>0}f(\mathtt{x},z)(k)\Big)\Big).
$$
\end{varitemize}
\end{definition}

\begin{proposition}[\cite{GDLZ}]\label{proposition}
$\PR$ coincides with the class of computable random functions.
\end{proposition}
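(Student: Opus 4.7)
The plan is to prove the two inclusions separately. First, I would show that every function in $\PR$ is computable by a probabilistic Turing machine (soundness); second, that every function computed by a PTM lies in $\PR$ (completeness). Both directions adapt the classical Kleene argument for the deterministic case, with the fair coin $r$ playing the role of the new computational primitive on the one side and the source of randomness to be simulated on the other.

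For soundness I would proceed by induction on the construction of $f\in\PR$. The zero, successor, and projection functions are computed by trivial deterministic PTMs; the fair coin $r$ is computed by a PTM making a single probabilistic transition and outputting $x$ or $x+1$ accordingly. For the closure operations, probabilistic composition $f\comp(g_{1},\dots,g_{n})$ is implemented by a PTM that sequentially simulates the machines for $g_{1},\dots,g_{n}$ on input $\mathtt{x}$, concatenating their independent random choices, and then feeds the resulting tuple to the machine for $f$; the independence of the coin tosses across the sub-simulations is exactly what makes the induced pseudo-distribution coincide with the Kleisli extension $f^{\mathbf{K}}(g_{1}(\mathtt{x}),\dots,g_{n}(\mathtt{x}))$. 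Primitive recursion is then a bounded loop that applies the step PTM at each stage, and minimization is an unbounded search loop that halts as soon as the inner PTM returns $0$.

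For completeness, given a PTM $\TM$ computing $f:\Nat^{k}\to\Distr{\Nat}$, I would mimic Kleene's normal form theorem in the probabilistic setting. The idea is to view $\TM$ as a deterministic oracle machine $\TM'$ reading its random bits from an auxiliary input $\omega\in\Bool^{\Nat}$ and to encode one-step simulation, halting, and output extraction by ordinary primitive recursive functions on the codes of $(\mathtt{x},\omega,t)$. These deterministic functions are already in the purely functional fragment of $\PR$. A uniform distribution over all bit-strings of length $n$ is obtained by iterating $r$ through probabilistic composition $n$ times, producing a $\PR$-function $\rho(n)\in\Distr{\Nat}$ whose support is $\{0,\dots,2^{n}-1\}$ with constant weight $2^{-n}$. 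The output distribution of $\TM$ is then recovered by combining $\rho$ with a probabilistic minimization that locates the first halting step and reads off the output; the measure of the set of random tapes witnessing output $l$ on input $\mathtt{x}$ is exactly $\compby{\TM}(\mathtt{x})(l)$ by construction.

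The main obstacle is the bookkeeping between the measure $\muCyl$ on $\Bool^{\Nat}$ and the countable sums defining $\PR$-functions: different halting branches consume different numbers of random bits, and one must avoid both double-counting and undercounting divergent branches. A clean way to handle this is to exploit the fact that any PTM uses at most one fresh random bit per step, so that if $\TM$ halts at step $t$ on the oracle $\omega$ then exactly the prefix $\omega|_{t}$ is consulted; probabilistic minimization then selects the least such $t$ and the associated prefix, and the resulting pseudo-distribution agrees with $\compby{\TM}$ on halting branches while assigning the correct total mass $1-\Pr[\TM\text{ halts on }\mathtt{x}]$ to divergence. As the proposition is credited to prior work, I would present this argument in schematic form and refer the reader to the cited reference for the full verification.
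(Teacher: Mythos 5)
The paper does not actually prove this proposition: it is imported wholesale from the cited work \cite{GDLZ}, so the only honest comparison is with that reference, which indeed establishes the equivalence along the two inclusions you outline (an induction on the structure of $\PR$ for soundness, and a probabilistic analogue of Kleene's simulation for completeness). Your soundness direction is fine in outline: the base functions are trivially PTM-computable, and the Kleisli-style composition, recursion and minimization are matched by sequential simulation with independent coin tosses.

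The completeness direction, however, has a genuine gap at exactly the point you flag and then wave away. Probabilistic minimization, as defined, re-runs $f(\mathtt{x},z)$ with \emph{fresh, independent} randomness for every candidate $z$: $\mu f(\mathtt{x})(y)=f(\mathtt{x},y)(0)\cdot\prod_{z<y}\bigl(\sum_{k>0}f(\mathtt{x},z)(k)\bigr)$. So the scheme ``sample a prefix of random bits inside the minimized function and stop at the first halting step'' does not reproduce the behaviour of a PTM running on a \emph{single} random tape. Concretely, take a machine $\TM$ that on its first coin flip halts with output $0$ (probability $\tfrac12$) and otherwise diverges; then $\compby{\TM}(\mathtt{x})(0)=\tfrac12$, but minimizing over ``simulate $t$ steps with freshly sampled bits and test halting'' assigns halting mass $\sum_{t\geq 1}\tfrac12\cdot(\tfrac12)^{t-1}=1$, since the non-halting outcomes at earlier $t$ are resampled independently rather than being the same event. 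Your phrase ``probabilistic minimization then selects the least such $t$ and the associated prefix'' presupposes a consistency across the search that the operator simply does not provide. The proof in \cite{GDLZ} has to work around precisely this: the random choices must be consumed exactly once, e.g.\ by threading the machine configuration (with final configurations made absorbing) through probabilistic primitive recursion, and by arranging the unbounded search so that the minimized predicate is effectively deterministic given the already-sampled randomness. Either reproduce such a device explicitly, or restrict yourself to what you say in your last sentence and defer entirely to the citation; as it stands, the sketch in between asserts a construction that, read literally, computes the wrong distribution.
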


The class $\PR$ is still conceptually far
from $\PPA$. 
In fact, while the latter has access to randomness
in the form of a global supply of random bits,
 the former can
fire random choices locally through a dedicated initial function.
To bridge the gap between the two, we introduce a third
characterization of computable random functions, which
is better-suited for our purposes.
\longv{
In doing so, we will define the class of 
oracle recursive functions, $\QPR$. Our definition is loosely inspired from 
\emph{oracle} TM, which can be defined as deterministic TM whose transition function can {query} a  \emph{random-bit tape}
$\omega \in \Bool^{\Nat}$.
}

%
%
The class of \emph{oracle recursive functions}, $\QPR$, is the smallest class of partial
functions of the form $f : \Nat^{m} \times \Bool^{\Nat} \rightharpoonup \Nat,$ which
(i) contains the class of \emph{oracle basic functions}, and (ii) is closed under
composition, primitive recursion, and 
minimization.
\longv{Formally,
\begin{definition}[The Class $\QPR$]\label{QPR}
The \emph{class of oracle recursive functions}, $\QPR$, is the smallest class of probabilistic functions 
containing:
\begin{varitemize}

\item The \emph{zero function}, $f_{0}$,
such that $f_{0}(x_{1}, \dots, x_{k}, \omega) = 0$;

\item The \emph{successor function}, $f_{s},$
such that $f_{s}(x,\omega) = x+1$;

\item The \emph{projection function}, $f_{\pi_{i}},$
 such that, for $1 \leq i \leq k$, $f_{\pi_{i}}(x_{1}, \dots, x_{k}, \omega) = x_{i}$;

\item The \emph{query function}, $f_{q},$ 
such that $ f_{q}(x,\omega) = \omega(x)$;
\end{varitemize} 
and closed under:

\begin{varitemize}

\item \emph{Oracle composition}. Given the oracle functions $h$ from $\Nat^{n}\times \Bool^{\Nat}$ to $\Nat$ and $g_{1}, \dots g_{n}$ (from $\Nat^{m}$), the function $f$ obtained by composition from them, is defined as follows: 
$$
f(x_{1}, \dots, x_{m}, \omega) = h(g_{1}(x_{1}, \dots, x_{m}, \omega), \dots, g_{n}(x_{1}, \dots, x_{m}, \omega), \omega);
$$

\item \emph{Oracle primitive recursion}. Given two oracle functions $h$ and $g$, from respectively $\Nat^{n}\times \Bool^{\Nat}$ and $\Nat^{n+2}\times \Bool^{\Nat}$ to $\Nat$, the function $f$, obtained by primitive recursion from them, is defined as follows:
$$
f(x,x_{1},\dots,x_{n}) = \begin{cases}
f(0, x_{1}, \dots, x_{n}, \omega) = h(x_{1}, \dots, x_{n}, \omega) \\
f(x+1, x_{1}, \dots, x_{n}, \omega) = g(f(x, x_{1}, \dots, x_{n}, \omega), x, x_{1}, \dots, x_{n}, \omega);
\end{cases}
$$

\item \emph{Oracle minimization}. Given the oracle function $g$ from $\Nat^{n+1}\times \Bool^{\Nat}$ to $\Nat$, the function $f$, obtained by minimization from $g$, is defined as follows:
$$
f(x_{1}, \dots, x_{n}, \omega) = \mu x(g(x_{1}, \dots, x_{n}, x, \omega) = 0).
$$
\end{varitemize}
\end{definition}
}
Remarkably, the only basic function
depending on $\omega$ is the query function. 
All the
closure schemes are independent from $\omega$ as well.

But in what sense do functions in $\QPR$ represent random functions?
In order to clarify the relationship between $\QPR$ and $\PR$, we associate each $\QPR$ function with a corresponding auxiliary function.
\begin{definition}[Auxiliary Function]\label{auxiliary}
  Given an oracle function $f
  : \Nat^{m} \times \Bool^{\Nat} \rightarrow \Nat$, the corresponding \emph{auxiliary
  function}, $f^{*} : \Nat^{m} \times \Nat \rightarrow \mathcal{P}(\Bool^{\Nat})$, is
  defined as follows:
  $
  f^{*}(x_{1}, \dots, x_{m}, y) = \{\omega \ | \ f(x_{1}, \dots, x_{m}, \omega) = y\}.
  $
\end{definition}
The following lemma ensures that the value of $f^{*}$ is always a \emph{measurable} set:

\begin{lemma}\label{lemma:measurability}
  For every oracle recursive function $f \in \QPR$, $f : \Nat^{m} \times \Bool^{\Nat} \rightarrow \Nat$, and natural numbers $x_{1}, \dots, x_{m}, y \in \Nat$,
   the set  $f^{*}(x_{1}, \dots, x_{m},
y)$ is measurable.
\end{lemma}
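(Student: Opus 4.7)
The plan is to proceed by induction on the structure of $f$ as an element of $\QPR$, showing in each case that $f^*(x_1,\dots,x_m,y)$ belongs to $\sigma(\cylS{})$.

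For the base cases, the zero, successor and projection functions do not depend on the oracle, so their auxiliary sets are either $\twoOm$ or $\emptyset$ depending on whether $y$ agrees with the (deterministic) output, and both sets are measurable. The query function gives $f_q^*(x,y)=\{\omega\mid \omega(x)=y\}$, which for $y\in\{0,1\}$ is a $1$-cylinder (namely $\Cyl{X^{y}_{x}}$, up to the indexing convention) and for $y\notin\{0,1\}$ is empty; in either case it lies in $\cylS{}\subseteq\sigma(\cylS{})$.

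For the inductive cases one rewrites each closure scheme as a countable Boolean combination of sets already known to be measurable. In the composition case, sharing the oracle $\omega$ across $h,g_1,\dots,g_n$ gives
\[
f^*(\mathtt{x},y)\;=\;\bigcup_{y_1,\dots,y_n\in\Nat}\Bigl(\bigcap_{i=1}^{n} g_i^*(\mathtt{x},y_i)\Bigr)\cap h^*(y_1,\dots,y_n,y),
\]
a countable union of finite intersections of measurable sets. For primitive recursion from $h$ and $g$, one first notes that $f^*(0,\mathtt{x},y)=h^*(\mathtt{x},y)$ and then argues by an inner induction on the recursion argument $k$, using
\[
f^*(k+1,\mathtt{x},y)\;=\;\bigcup_{j\in\Nat} f^*(k,\mathtt{x},j)\cap g^*(j,k,\mathtt{x},y).
\]
For minimization of $g$ one uses
\[
f^*(\mathtt{x},y)\;=\;g^*(\mathtt{x},y,0)\ \cap\ \bigcap_{z<y}\bigcup_{k>0} g^*(\mathtt{x},z,k),
\]
which is again a countable combination of measurable sets; note that $\omega$'s for which $g(\mathtt{x},z,\cdot)$ diverges at some $z<y$ simply fail to belong to any $g^*(\mathtt{x},z,k)$, so partiality is automatically handled and does not spoil measurability.

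The only delicate point is the minimization case, where one must be careful that the $\sigma$-algebra is closed under the countable intersection over $z<y$ and that partial definedness of $g$ is correctly absorbed in the description above; once this is articulated, the whole argument reduces to the fact that $\sigma(\cylS{})$, being a $\sigma$-algebra, is closed under countable unions, countable intersections and complements, so all the combinations produced by the three closure schemes remain in $\sigma(\cylS{})$.
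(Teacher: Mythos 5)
Your proof is correct and follows essentially the same route as the paper's: induction on the structure of $f\in\QPR$, with the base cases yielding $\twoOm$, $\emptyset$ or a cylinder, and the closure schemes expressed as countable unions/intersections of the auxiliary sets given by the induction hypothesis, exactly as in the paper's composition case. The only difference is that you spell out primitive recursion and minimization (which the paper dismisses as ``similar''), and your explicit handling of the query function for $y\notin\{0,1\}$ and of partiality in the minimization case is, if anything, slightly more careful than the original.
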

\longv{
\begin{proof}
We will show
 that, for each $f \in \QPR$, $f^{*} \in \sigma(\cylS{})$, by induction on the structure of oracle recursive functions:
\begin{varitemize} 
\item Let $f\in\QPR$ be an oracle basic function. There are four
possible cases:

\par Zero Function.
Let $f_{0}$
be the zero function, $f_{0}(x_{1}, \dots, x_{n}, \omega) = 0$. 
Then, 
$$
f_{0}^{*}(x_{1}, \dots, x_{n}, 0) = \{\omega \ | \ f_{0}(x_{1}, \dots, x_{n}, \omega) = 0\} = \twoOm.
$$
For Axioms 1 of the $\sigma$-algebra, $\twoOm \in \sigma(\cylS{})$, so $f_{0}^{*}\in \sigma(\cylS{})$.

\par Successor Function.
Let $f_{s}$
be the successor function $f_{s}(x, \omega) = x+1$.
Then, 
$$
f_{s}^{*}(x,x+1) = \{\omega \ | \ f_{s}(x,\omega) = x+1\} = \twoOm.
$$ 
As before, $f_{s}^{*} \in \sigma(\cylS{})$.

\par Projection Function.
Let $f_{\pi_{i}}$ 
be the projection function, $f_{\pi_{i}}(x_{1}, \dots, x_{n}, \omega) = x_{i}$, with $1\leq i \leq n$.
Then, 
$$
f_{\pi_{i}}^{*}(x_{1}, \dots, x_{n}, x_{i}) = \{\omega \ | \ f_{\pi_{i}}(x_{1}, \dots, x_{n}, \omega) = x_{i}\} = \twoOm.
$$
Again, $f_{\pi_{i}}^{*} \in \sigma(\cylS{})$.

\par Query Function.
Let $f_{q}$ 
be the query function, $f_{q}(x, \omega') = \omega'(x)$.
Then, 
$$
f_{q}^{*}(x,\omega') = \{\omega \ | \ f_{q}(x,\omega) = \omega'(x)\} = \{\omega \ | \ \omega(x) = 0\}
$$
for $\omega'(x) = 0$ or $f_{q}^{*}(x,\omega') = \{\omega$ $|$ $\omega(x) = 1\}$ if $\omega'(x) = 1$, in both cases $f_{q}^{*}(x,\omega')$ is a  (thin) cylinder, so $f_{q}^{*}(x, 0) \in \sigma(\cylS{})$. Therefore, $f_{q}^{*} \in \sigma(\cylS{})$.

\item Let $f\in\QPR$ be obtained by oracle composition, 
recursion or minimization from $\QPR$ functions.
Since the three cases are proved in a similar way,
let us take into account (simple) composition only.
Let $f : \Nat^{n} \times \Bool^{\Nat} \rightarrow \Nat$ be
obtained by (unary) composition from $h : \Nat \times \Bool^{\Nat} \rightarrow \Nat$ and $g : \Nat^{n} \times \Bool^{\Nat} \rightarrow \Nat$. Assume $f(x_{1}, \dots, x_{n}, \omega)$ = $h(g(x_{1}, \dots, x_{n}, \omega), \omega)$ = $v$.
 By Definition \ref{auxiliary}, $f^{*}(x_{1}, \dots, x_{n}, v)$ = 
$\bigcup_{z\in \Nat}\{\omega$ $|$ $h(z,\omega) = v\}$ $\cap$ $\{\omega$ $|$ $g(x_{1}, \dots, x_{n}, \omega) = z\}$ = $\bigcup_{z\in \Nat}h^{*}(z,v) \cap g^{*}(x_{1}, \dots, x_{n}, z)$.
 By IH, $h^{*},g^{*} \in \sigma(\cylS{})$ so, by Axiom 3 of $\sigma$-algebras, $h^{*} \cap g^{*} \in \sigma(\cylS{})$ as well. Therefore, since $f^{*}$ is a countable union of measurable sets, again for Axiom 3, it  is measurable as well, so $f^{*} \in \sigma(\cylS{})$.
\end{varitemize}
\end{proof}}
Thanks to Lemma~\ref{lemma:measurability}, we can associate any {oracle} recursive function $f:\Nat^{m} \times \Bool^{\Nat} \rightarrow \Nat$
with a random function 
$\qtorand{f}:\Nat^m\rightarrow\Distr{\Nat}$, 
defined as:
$
\qtorand{f}(x_1,\ldots,x_m)(y)=\muCyl(f^{*}(x_1,\ldots,x_m,y)).
$
This defines a close correspondence between the classes
$\PR$ and $\QPR$.

\begin{proposition}\label{prop2}
 For each $f \in \PR$, there is an oracle function $g\in \QPR$, such that
 $f=\qtorand{g}$.
 Symmetrically, for any $f\in\QPR$, there is a random function $g \in \PR$, such
 that $g = \qtorand{f}$.
\end{proposition}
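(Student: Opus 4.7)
My plan is to prove the two inclusions by structural induction on the respective function algebras, using Proposition~\ref{proposition} as a shortcut for the direction from $\QPR$ to $\PR$. For the first direction I would proceed by induction on the definition of $f\in\PR$, building a translation $\tau(f)\in\QPR$ with $\qtorand{\tau(f)}=f$. The four base functions translate directly: $z$, $s$, and $\pi_m^n$ go to their oracle-independent analogues $f_0$, $f_s$, $f_{\pi_i}$; the fair coin $r$ goes to the $\QPR$ function returning $x+\omega(0)$ (built from $f_s$, $f_{\pi_1}$, and $f_q$), whose auxiliary sets $\{\omega\mid\omega(0)=0\}$ and $\{\omega\mid\omega(0)=1\}$ each have measure $1/2$. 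The delicate issue is that $\PR$ composition and the Kleisli extension used in primitive recursion sample their subcomputations independently, whereas $\QPR$ composition just shares $\omega$ across all of them. To restore independence I would fix a computable pairing bijection $\langle\cdot,\cdot\rangle:\Nat^2\to\Nat$ and define, by induction on $\QPR$ programs, an oracle-shift operator $\phi_j$ that replaces every query $f_q(t,\omega)$ by $f_q(\langle j,t\rangle,\omega)$. The composition $h\comp(g_1,\dots,g_n)$ then translates to an oracle function of the shape $\phi_0(\tau h)\bigl(\phi_1(\tau g_1)(x,\omega),\dots,\phi_n(\tau g_n)(x,\omega),\omega\bigr)$, so that the $n+1$ subcomputations read disjoint coordinates of $\omega$; primitive recursion and minimization are handled analogously, using a ternary pairing to assign a fresh stream to each recursive step, resp.~each tested index.

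\paragraph{From $\QPR$ to $\PR$.}
By Proposition~\ref{proposition} it suffices to exhibit, for each $f\in\QPR$, a PTM computing $\qtorand{f}$. Given the deterministic oracle procedure underlying $f$, I would construct a PTM $M_f$ that runs this procedure while generating $\omega$ lazily: every request for bit $k$ is answered by consulting a memo of previously sampled positions, flipping a fresh fair coin only when $k$ is new and storing the outcome. Since $\muCyl$ is the uniform Bernoulli product measure on $\Bool^{\Nat}$, the output distribution of $M_f$ on input $x$ coincides with $y\mapsto\muCyl(f^{*}(x,y))=\qtorand{f}(x)(y)$, so Proposition~\ref{proposition} yields a corresponding $g\in\PR$.

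\paragraph{Main obstacle.}
The core technical step is verifying, in the $\PR$ to $\QPR$ direction, that $\qtorand{\tau(h\comp(g_1,\dots,g_n))}=\qtorand{\tau h}\comp(\qtorand{\tau g_1},\dots,\qtorand{\tau g_n})$, and analogously for primitive recursion and minimization. This reduces to the factorization property of the Bernoulli product measure on disjoint index sets, but must be reapplied carefully at each closure scheme --- especially for primitive recursion, where infinitely many fresh streams may be consumed across the recursion, and for minimization, where a fresh stream must be dedicated to each tested value; in both cases one must also check that the stream-shifting operator $\phi_j$ really is a $\QPR$-definable transformation of programs.
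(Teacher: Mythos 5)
Your proposal is correct in outline, but it reaches the $\PR\!\to\!\QPR$ direction (the only one the paper actually proves, and the only one needed for Theorem~\ref{theorem:arithmetization}) by a genuinely different mechanism. The paper makes independence of subcomputations work by \emph{sequential block allocation}: every translated function carries an extra offset argument $k$, reads a contiguous block of the oracle starting at position $k$, and returns, via the pairing $\langle\cdot,\cdot\rangle$, both its value and the length of the block it consumed; the caller then advances the offset by that length before invoking the next component. The measure bookkeeping is done through prefix-free ``trees'' $X$ and $X$-functions $h$, with the invariant $f(m_{1},\dots,m_{n})(y)=\sum_{h(t)=y}2^{-|t|}$ (Lemma~\ref{lemma:aux}), and Proposition~\ref{prop2} falls out by projecting with $\pi_{1}$ at offset $0$. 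You instead \emph{interleave}: each component is re-indexed through the pairing so that it only queries its own substream $\{\langle j,t\rangle\mid t\in\Nat\}$, and independence is then the factorization of the Bernoulli product measure over disjoint coordinate sets. Both work; the paper's version pays with the tree/unique-decomposition machinery but keeps the oracle usage of a terminating run finite and explicitly measured, while yours avoids trees entirely at the price of (i) proving that your shift $\phi_{j}$ is a legitimate $\QPR$-to-$\QPR$ transformation (it is: replace the query function by its composition with $\langle j,\cdot\rangle$ and argue by induction that this precomposes the oracle with an injection), and (ii) noting that for primitive recursion and minimization the stream index must depend on a \emph{runtime} value (the recursion step, resp.\ the tested index), so the shift has to be indexed by an argument rather than a constant --- you gesture at this with the ternary pairing, and it should be spelled out, but it is a detail rather than a gap. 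For the converse direction the paper offers no proof (it explicitly only needs the first half), so your route --- lazily simulating the oracle procedure on a PTM with memoized coin flips and invoking Proposition~\ref{proposition} --- is a reasonable completion, though it leans on the informal equivalence between $\QPR$ and oracle Turing machines, which the paper itself leaves unformalized.
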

\longv{
For our purpose, only the first part of Proposition~\ref{prop2} is necessary, namely that
for every $f \in \PR$, 
there is a $g \in \QPR$ 
such that $\qtorand{g} = f$.
This is established by means of a few intermediate steps.

First, some auxiliary notions are introduced.
A \emph{computable bijection} between 
$\Nat$ and $\Nat \times \Nat$ and 
the corresponding maps 
$\langle \cdot, \cdot \rangle : \Nat \times \Nat \rightarrow \Nat$
and 
$\pi_{1}, \pi_{2} : \Nat \rightarrow \Nat$ is fixed.
A \emph{tree} is defined as a subset 
$X$ of $\Bool^{*}$ (the finite set of strings) 
such that if $t \in X$ and $v \sqsubset t$
($\sqsubseteq$ being the prefix relation),
then $v \not \in X$. 
Given $t \in \Bool^{*}$, $\omega \in \Bool^{\Nat}$ is
said to be an \emph{n-extension} of $t$ 
if and only if 
$\omega= v \cdot t \cdot \omega'$,
where $|v| = n$ and
$\omega' \in \Bool^{\Nat}$.
The set of all $n$-extensions of $t$ is indicated
as $\Ext^{n}_{t}$
and is measurable.
Moreover, $\mu(\Ext^{n}_{t}) = \frac{1}{2^{|t|}}$
for every $n$ and $t$.
Given a tree $X$, every function $f : X \rightarrow \Nat$
is said to be an \emph{$X$-function}.
Thus, an oracle function 
$f : \Nat^{n+1} \times \Bool^{\Nat} \rightarrow \Nat \in \QPR$ 
returns a tree $X$ and an $X$-function $g$ on input $(m_{1}, \dots, m_{n})$ if and only if for every $k \in \Nat$, it holds that:
\begin{itemize}
\itemsep0em
\item $f(m_{1}, \dots, m_{n}, k, \omega)$ is defined if and only if 
$\omega \in \Ext^{k}_{t}$, where $t \in X$.
\item if $\omega \in \Ext^{k}_{t}$ and $t \in X$, then
$$
f(m_{1}, \dots, m_{n}, k, \omega) = q,
$$
where $\pi_{1}(q) = g(t)$ and $\pi_{2}(q) = |t|$.
\end{itemize}
It is now possible to prove the following preliminary lemma.
\begin{lemma}\label{lemma:aux}
For every $f : \Nat^{n} \rightarrow \Distr{\Nat} \in \PR$, there is an oracle recursive function 
$g : \Nat^{n+1} \times \Bool^{\Nat} \rightharpoonup\Nat \in \QPR$, 
such that for every $m_{1}, \dots, m_{n}$, $g$ returns a tree 
$X_{m_{1}, \dots, m_{n}}$ and 
an $X_{m_{1}, \dots, m_{n}}$-function
$h_{m_{1}, \dots, m_{n}}$ on input 
$m_{1}, \dots, m_{n}$, and
$$
f(m_{1}, \dots, m_{n})(y) = \sum_{h_{m_{1}, \dots, m_{n}}(t)=y}\frac{1}{2^{|t|}}.
$$
\end{lemma}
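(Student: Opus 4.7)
The plan is to prove the lemma by induction on the structure of $f \in \PR$. The key design choice is that the oracle function $g$ uses its $(n+1)$-th argument $k$ as an \emph{offset} on the oracle tape, and encodes its output as a pair $\langle v, \ell \rangle$ where $v$ is the value computed and $\ell$ is the number of bits actually consumed from positions $k, k+1, \ldots, k+\ell-1$ of $\omega$. This invariant exactly matches the definition of ``returning a tree $X$ and an $X$-function $h$'', since $\ell = |t|$ for the leaf $t \in X$ reached, and it guarantees that sub-computations invoked in sequence will consult disjoint portions of $\omega$.

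For the base cases, the deterministic generators (zero, successor, projection) yield oracle functions that ignore $\omega$ and return $\langle v, 0 \rangle$ with singleton tree $\{\varepsilon\}$, so the sum $\sum_{h(t) = y} 2^{-|t|}$ trivially matches. For the fair coin $r(x)$, I set $g(x, k, \omega) = \langle x + f_{q}(k, \omega), 1 \rangle$, with tree $\{0, 1\}$ and $X$-function $h_{x}(0) = x$, $h_{x}(1) = x+1$, so that $\sum_{h_{x}(t) = y} 2^{-|t|} = \frac{1}{2} \delta_{y,x} + \frac{1}{2} \delta_{y, x+1} = r(x)(y)$. For the inductive cases, each $\PR$ closure operation is simulated by a matching $\QPR$ closure that plumbs the offset through: for $f \comp (g_{1}, \dots, g_{n})$, the oracle function sequentially invokes the oracle functions for $g_{1}, \dots, g_{n}$ at offsets $k, k + \ell_{1}, k + \ell_{1} + \ell_{2}, \ldots$, then invokes that for $f$ at offset $k + \sum_{j} \ell_{j}$, returning $\langle v, \ell + \sum_{j} \ell_{j} \rangle$. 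Primitive recursion and minimization are handled analogously, using the matching $\QPR$ schemes; the resulting tree $X$ is obtained by concatenating the relevant sub-trees, so that a leaf $t \in X$ decomposes as $t_{1} \cdots t_{n} s$ and $2^{-|t|} = \prod_{j} 2^{-|t_{j}|} \cdot 2^{-|s|}$, yielding the sum identity by distributing products over sums.

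The main obstacle lies in the minimization case, for two reasons. First, the $\QPR$ oracle minimization may diverge on some oracles $\omega$, and one has to verify that the measure of these divergent oracles is exactly $1 - \sum_{y} \mu f(\mathbf{x})(y)$, so that the pseudo-distribution semantics of probabilistic minimization is faithfully reproduced; this requires partitioning $\Bool^{\Nat}$ into cylinders indexed by finite search paths $(k_{0}, \dots, k_{z-1}, 0)$ and showing that each such cylinder has measure equal to the corresponding term $f(\mathbf{x}, z)(0) \cdot \prod_{w < z} \sum_{k > 0} f(\mathbf{x}, w)(k)$ appearing in the definition of $\mu f$. Second, the bookkeeping of offsets through nested primitive recursion and composition must be carried out inside $\QPR$, which forces one to encode lengths and trees via the pairing $\langle \cdot, \cdot \rangle$ and to rely on primitive-recursive tree concatenation. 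Both tasks are technically delicate but ultimately reduce to standard arithmetization techniques combined with the measure-theoretic identity $\mu(\Ext^{k}_{t}) = 2^{-|t|}$ stated earlier.
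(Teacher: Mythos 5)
Your proposal is correct and follows essentially the same route as the paper: induction on the structure of $f\in\PR$, with the oracle function threading an offset $k$ and returning a pair $\langle v,\ell\rangle$ recording the value and the number of oracle bits consumed, so that the concatenated trees and the identity $2^{-|t_{1}\cdots t_{p}\cdot s|}=\prod_{j}2^{-|t_{j}|}\cdot 2^{-|s|}$ give the required sum; your base cases and composition case coincide with the paper's (which likewise only writes out composition and declares recursion and minimization analogous). Your extra remarks on minimization (divergent oracles carrying the missing mass of the pseudo-distribution, and the offset bookkeeping being done inside $\QPR$ via the pairing) are sound observations about details the paper leaves implicit, not a departure from its argument.
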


\begin{proof}
The proof is by induction on the structure of $f$ as an element of $\PR$.
\begin{varitemize}
\item Let $f\in \PR$ be a basic probabilistic function.
Then, there are four possible cases.

\par Zero Function. Let $z \in \PR$ be the zero function. Then, 
$g \in \QPR$ is an oracle function, 
so defined that on inputs $m_{1}, k, \omega$,
it returns the value $\langle 0,0\rangle$.
Indeed, $g$ returns the tree
$X_{m_{1}} = \{ \epsilon \}$ and
the $X_{m_{1}}$-function $h_{m_{1}}$ always returning
0,
as it can be easily checked. 
Moreover:
\begin{align*}
z(m_{1})(y) &= \begin{cases}
1 \ \ \ \text{ if } y = 0 \\
0 \ \ \ \text{ otherwise}
\end{cases} \\
&= \sum_{h_{m_{1}}(s) = y}\frac{1}{2^{|t|}}.
\end{align*}

\par
Successor Function. Let $s \in \PR$ be the successor function.
Then, $g \in \QPR$ is an oracle function,
so defined that
on inputs $m_{1}, k, \omega$,
it returns the value
$\langle m_{1}+1, 0\rangle$. 
So, $g$ returns the tree $X_{m_{1}} = \{\epsilon\}$
and the $X_{m_{1}}$-function 
$h_{m_{1}}$ always returning $m_{1}+1$.
Therefore:
\begin{align*}
s(m_{1})(y) &= \begin{cases}
1 \ \ \ \text{ if } y = m_{1} + 1 \\
0 \ \ \ \text{ otherwise}
\end{cases} \\
&= \sum_{h_{m_{1}}(t)=y} \frac{1}{2^{|t|}}.
\end{align*}

\par Projection Function. 
Let $\pi^{n}_{i} \in \PR$, with $1 \leq i \leq n$,
be the projection function.
Then, $g \in \QPR$ is an oracle function,
so defined that, on inputs 
$m_{1}, \dots, m_{n}, k, \omega$ it returns
the value $\langle m_{i}, 0\rangle$. 
Indeed, $g$ returns the tree $X_{m_{1}, \dots, m_{n}} = \{\epsilon\}$ and the $X_{m_{1}, \dots, m_{n}}$-function $h_{m_{1}, \dots, m_{n}}$ always returning $m_{i}$.
Moreover:
\begin{align*}
\pi^{n}_{i}(m_{1}, \dots, m_{n})(y) &= \begin{cases} 
1 \ \ \ \text{ if } y = m_{i} \\
0 \ \ \ \text{ otherwise}
\end{cases} \\
&= \sum_{h_{m_{1}, \dots, m_{n}}(t) = y} \frac{1}{2^{|t|}}.
\end{align*}

Fair Coin Function. Let $r \in \PR$ be the fair coin function.
Then, $g \in \QPR$ is an oracle function,
so defined that 
$g(m_{1}, k, \omega) = \langle l, 1\rangle$
where,
$$
l = \begin{cases}
m_{1} \ \ \ \text{ if } \omega[k] = 0 \\
m_{1+1} \ \ \ \text{ if } \omega[k] = 1.
\end{cases}
$$

\item 
If $f$ is obtained by either composition, primitive recursion or minimization, the argument is a bit more involved, as in defining the function $g$ we must take into account how the bits of the oracle accessed by $g$ are distributed in an independent way to each of the component functions. We will here only illustrate how this works in the case of composition.

Let then $f$ be obtained by composition from $f_{1}, \dots, f_{p}: \Nat^{n} \rightarrow \Distr{\Nat} $ and 
$f': \Nat^{p}\rightarrow \Distr{\Nat}$, i.e.~$f(m_{1},\dots, m_{n})(y)=\sum_{i_{1},\dots, i_{p}}f'(i_{1},\dots, i_{p})(y)\cdot \prod_{j=1}^{p}f_{j}(m_{1},\dots, m_{n})(i_{j})$.
By induction hypothesis there exist functions $g_{1},\dots, g_{p}: \Nat^{n+1}\times \Bool^{\Nat} \rightarrow \Nat$ and $g': \Nat^{p+1}\times \Bool^{\Nat}\rightarrow \Nat$ such that 

\begin{enumerate}
\item for all $m_{1},\dots, m_{n}\in \Nat$, each $g_{i}$ returns a tree $X^{i}_{m_{1},\dots, m_{n}}$ and an $X^{i}_{m_{1},\dots, m_{n}}$-function $h^{i}_{m_{1},\dots, m_{n}}$, and 
$f_{i}(m_{1},\dots, m_{n})(y)= \sum_{h^{i}_{m_{1},\dots, m_{n}}(t)=y}\frac{1}{2^{|t|}}$; 
\item for all $m_{1},\dots, m_{p}\in \Nat$, $g'$ returns a tree $Y_{m_{1},\dots, m_{n}}$ and a $Y_{m_{1},\dots, m_{p}}$-function $h'_{m_{1},\dots, m_{p}}$, and 
$f'(m_{1},\dots, m_{p})(y)= \sum_{h'_{m_{1},\dots, m_{p}}(t)=y}\frac{1}{2^{|t|}}$.
\end{enumerate}

We thus have that 
\begin{align*}
f(m_{1},\dots, m_{n})(y) &=
\sum_{i=1,\dots,i_{p}} \left ( \sum_{h'_{i_{1},\dots, i_{p}}(s)=y}\frac{1}{2^{|s|}}\right)\cdot
\left (\prod_{j=1}^{p}\sum_{h^{j}_{m_{1},\dots, m_{n}}(t)=i_{j}}\frac{1}{2^{|t|}}\right)\\
& = 
\sum_{i=1,\dots,i_{p}}
\left(
\sum_{ 
h'_{i_{1},\dots, i_{p}}(s)=y, 
h^{j}_{m_{1},\dots, m_{n}}(t_{j})=i_{j}
}
\frac{1}{2^{|s|+\sum_{j=1}^{p}|t_{j}|}}
\right)
\end{align*}

For all $m_{1},\dots, m_{n}\in \Nat$, let $X_{m_{1},\dots, m_{n}}= \{t_{1}\cdot \dots \cdot t_{p}\mid s\in t_{j}\in X^{j}_{m_{1},\dots, m_{n}}, s \in Y_{h_{\vec m}^{1}(t_{1}),\dots, h_{\vec m}^{p}(t_{p})}\}$.

Observe that any $v\in X_{m_{1},\dots, m_{n}}$ can be decomposed in a unique way as 
$v= t_{0}\cdot t_{1}\cdot \dots \cdot t_{p}$. In fact, if $t'_{0}\cdot t'_{1}\cdot \dots \cdot t'_{p}$ is any other decomposition, let $j\leq p$ be minimum such that $t_{j}\neq t'_{j}$. Then it must be either $t_{j}\sqsubset t'_{j}$ or $t'_{j}\sqsubset t_{j}$, which contradicts the fact that $X^{j}_{m_{1},\dots, m_{n}}$ and $Y_{h_{\vec m}^{1}(t_{1}),\dots, h_{\vec m}^{p}(t_{p})}$ are all trees. 

Using this fact we can show that  
$X_{m_{1},\dots, m_{n}}$ is also a tree: suppose $v=t_{0}\cdot t_{1}\cdot \dots \cdot t_{p}\in  X_{m_{1},\dots, m_{n}}$ and suppose $v'\in X_{m_{1},\dots, m_{n}}$, where $v'\sqsubset v$. 
Then $v'$ has a unique decomposition $t'_{0}\cdot \dots \cdot t'_{p}$ and one can easily show by induction on $j\leq p$ that $t'_{j} =t'_{j}$ holds. Hence it must be $v'=v$, against the assumption.

Let $h_{m_{1},\dots, m_{n}}:X_{m_{1},\dots, m_{n}}\to \Nat$ be defined by $h_{m_{1},\dots, m_{n}}(v)=y$, where 
$v$ uniquely decomposes as $s\cdot t_{1}\cdot \dots \cdot t_{p}$, $h^{j}_{\vec m}(t_{j})=i_{j}$ and $h'_{i_{1},\dots, i_{p}}(s)=y$. We can finally define: 
\begin{center}
\resizebox{\textwidth}{!}{$
g(m_{1},\dots, m_{n}, k, \omega)= 
\left \langle 
\pi_{1}\Big(g'\big ( \pi_{1}(L_{1}),\dots, \pi_{1}(L_{p}), k+\sum_{j=1}^{p }
L_{j}
, \omega\big)\Big ), 
\pi_{2}\Big(g'\big ( \pi_{1}(L_{1}),\dots, \pi_{1}(L_{p}), k+\sum_{j=1}^{p }
L_{j}
, \omega\big)\Big )+ R_{p}
  \right\rangle
$}
\end{center}

where
 the $L_{j}, R_{j}$ are defined inductively as follows:
\begin{align*}
L_{1}& =g_{1}(\vec m, k, \omega)  & R_{1}&= 0 \\
L_{j+1}& = g_{j+1}(\vec m, k+ R_{j+1} , \omega)  & R_{j+1}& = R_{j}+\pi_{2}(L_{j})
\end{align*}

It can be checked that, by construction, $g(m_{1},\dots, m_{n}, k,\omega)=\langle h_{m_{1},\dots, m_{n}}(v), |v|\rangle$, where $v\in \mathsf{EXT}^{k}_{v'}$ and $v'$ uniquely decomposes as $s\cdot t_{1}\cdot \dots \cdot t_{p}$.
Using the equations above we thus have: 
\begin{align*}
f(m_{1},\dots, m_{n})(y) &= \sum_{ h_{m_{1},\dots, m_{n}}(v)=y} \frac{1}{2^{|v|}}
\end{align*}

\end{varitemize}


\end{proof}
%
%
%
%

The desired Proposition \ref{prop2} can now be proved as
a corollary of Lemma \ref{lemma:aux}.
Indeed, once it is observed that 
if $g$ is obtained from
$f\in$, as in Lemma \ref{lemma:aux}, and 
$h(m_{1}, \dots, m_{n}, \omega) = \pi_{1}(g(m_{1}, \dots, m_{n}, 0, \omega))$, then it follows that
$f = \qtorand{h}$.

}

\paragraph{The Proof of the Main Result.}
The last ingredient to establish Theorem~\ref{theorem:arithmetization} is the following lemma, easily proved by induction on the structure of $\QPR$ functions. 
\begin{lemma}\label{lemma:randomfunction}
For every oracle function $f \in \QPR$, the random function $\qtorand{f}$ is \arithmetical.
\end{lemma}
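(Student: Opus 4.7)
The plan is to prove a strengthened statement by induction on the structure of $f \in \QPR$: to each such $f$ of arity $m$ I associate a $\PPA$ formula $F_{f}(x_{1},\dots,x_{m},y)$ satisfying the set-equality
$$
\sem{F_{f}(n_{1},\dots,n_{m},l)}{} = f^{*}(n_{1},\dots,n_{m},l)
$$
for all $n_{1},\dots,n_{m},l \in \Nat$. Applying $\muCyl$ to both sides then immediately yields the arithmetical property of Definition~\ref{arithmetical}, since $\qtorand{f}(n_{1},\dots,n_{m})(l) = \muCyl(f^{*}(n_{1},\dots,n_{m},l))$ by construction. The strengthening to a set-level equality is essential, as the composition case combines formulas via intersection and the measure of an intersection is not determined by the measures of its constituent sets.

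For the base cases, the zero, successor, and projection functions have $\omega$-independent outputs, so the purely arithmetical formulas $(y = \zero)$, $(y = \suc{x})$, and $(y = x_{i})$ suffice: their semantics is either $\twoOm$ or $\emptyset$, matching $f^{*}$ exactly. For the query function $f_{q}(x,\omega) = \omega(x)$, I would take $F_{f_{q}}(x,y) := (y = \suc{\zero} \land \atm{x}) \lor (y = \zero \land \neg \atm{x})$, whose interpretation is precisely $\{\omega \mid \omega(x) = y\}$ by the clauses for $\atm{\cdot}$ in Definition~\ref{Semantics}.

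For the inductive cases, I would assemble new formulas from those supplied by the induction hypothesis. For composition of $h$ with $g_{1},\dots,g_{n}$, I set
$$
F_{f}(x_{1},\dots,x_{m},y) := \exists z_{1}.\dots\exists z_{n}.\, F_{g_{1}}(x_{1},\dots,x_{m},z_{1}) \land \dots \land F_{g_{n}}(x_{1},\dots,x_{m},z_{n}) \land F_{h}(z_{1},\dots,z_{n},y).
$$
For minimization of $g$, I take
$$
F_{f}(x_{1},\dots,x_{m},y) := F_{g}(x_{1},\dots,x_{m},y,\zero) \land \forall z.(z < y \rightarrow \neg F_{g}(x_{1},\dots,x_{m},z,\zero)).
$$
For primitive recursion from $h$ and $g$, I invoke G\"odel's $\beta$-function, which is definable by a $\PA$-formula (hence a $\PPA$-formula) through its graph, to encode the intermediate trace of the recursion:
$$
F_{f}(k,x_{1},\dots,x_{m},y) := \exists c.\, F_{h}(x_{1},\dots,x_{m},\beta(c,\zero)) \land (\forall j. j < k \rightarrow F_{g}(\beta(c,j),j,x_{1},\dots,x_{m},\beta(c,\suc{j}))) \land \beta(c,k) = y.
$$

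The main obstacle is verifying the set-equality $\sem{F_{f}}{} = f^{*}$ at each inductive step. For composition, the equality follows from noting that, for each $\omega$, the intermediate values $z_{i} = g_{i}(x_{1},\dots,x_{m},\omega)$ are uniquely determined, so the semantic union over $z_{1},\dots,z_{n}$ induced by the $\exists$ quantifier collapses exactly onto $f^{*}$. The primitive recursion case is the most delicate: one has to show that for each $\omega$ on which the recursion terminates with value $l$ there exists a code $c$ faithfully encoding the intermediate trace, and conversely that any witnessing $c$ forces the intended sequence of $g$-computations to hold for that $\omega$. This is handled by the classical $\beta$-function arithmetization techniques, now interpreted against the oracle semantics. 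Measurability of all sets involved is guaranteed at every step by Lemma~\ref{lemma:measurability}.
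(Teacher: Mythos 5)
Your proof follows essentially the same route as the paper's: an induction on the structure of $\QPR$ functions that mirrors G\"odel's classical arithmetization, with the query function handled through the predicate $\atm{\cdot}$ (as the paper notes, since both $\QPR$ and $\PPA$ draw their randomness from one and the same sequence of bits, the induction goes through almost verbatim). Your strengthening of the induction hypothesis to the set-level equality $\sem{F_f(n_1,\dots,n_m,l)}{} = f^{*}(n_1,\dots,n_m,l)$ is a genuinely good move: the paper's proof states only the measure-level hypothesis, which, as you rightly observe, would not by itself determine the measure of the intersections and unions arising in the composition step, so your formulation makes explicit what the paper's argument implicitly relies on. Your formula for the query function is likewise a cleaner, well-formed version of what the paper intends.

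The one clause that needs repair is minimization. Under the standard convention for $\mu$ (the one forced by the correspondence with probabilistic minimization in $\PR$, whose defining product $\prod_{z<y}\sum_{k>0}f(\mathtt{x},z)(k)$ requires convergence to a nonzero value below the minimum), $f(\vec n,\omega)=l$ demands that $g(\vec n,z,\omega)$ be \emph{defined} and nonzero for every $z<l$. Your conjunct $\forall z.(z<y \rightarrow \neg F_{g}(\vec x,z,\zero))$ is also satisfied by oracles $\omega$ for which $g(\vec n,z,\omega)$ diverges at some $z<l$, so the set equality, and in general even the measure equality, can fail: if $g(\vec n,0,\cdot)$ diverges on every oracle while $g(\vec n,1,\cdot)$ is identically $0$, your formula receives measure $1$ at $l=1$ although $\qtorand{f}(\vec n)(1)=0$. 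The classical fix applies: replace the conjunct by $\forall z.(z<y \rightarrow \exists w.(\neg(w=\zero)\land F_{g}(\vec x,z,w)))$. With this change the argument is sound.
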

\longv{
\begin{proof}
By Definition \ref{auxiliary}, given an arbitrary oracle function $f \in \QPR$, $f(x_{1}, \dots, x_{m}, \omega)$ = $y$, the corresponding $\qtorand{f}$ is defined as follows
$$
\qtorand{f} = \muCyl(f^{*}(x_{1}, \dots, x_{m}, y)) = \muCyl(\{\omega \ | \ f(x_{1}, \dots, x_{m}, \omega) = y\}).
$$ 
Lemma~\ref{lemma:randomfunction} states that such 
$\qtorand{f}$ is \arithmetical, 
i.e.~there is an $\PPA$ formula
$\fone_{\qtorand{f}}$, 
such that for every $n_{1}, \dots, n_{m}, l$, 
$$
\muCyl\big(\sem{\fone_{\qtorand{f}}(n_{1}, \dots, n_{m},l)}{}\big) = \qtorand{f}(n_{1}, \dots, n_{m})(l).
$$
The proof of the Lemma is by induction on the structure of oracle recursive functions.
Actually, the only case which is worth-considering is the one of \emph{query} functions. 
Indeed, all the other cases are obtained by trivial
generalizations of the standard proof by G\"odel~\cite{Godel}.

\begin{varitemize}
\item For each \emph{basic} oracle function, $f\in \QPR$, 
the corresponding random function, $\qtorand{f}$, is \arithmetical. There are four possible cases:

\par Oracle zero function. Let $f_{0}\in \QPR$ be the oracle zero function.
For Definition~\ref{auxiliary},
$f_{0}(x_{1},$ $\dots, x_{n}, \omega) = y$, with $y=0$, 
is such that the corresponding $\qtorand{f_{0}}$ 
is defined by the $\PPA$ formula,\footnote{Indeed,
 $\qtorand{f_{0}}$ = $\muCyl(\{\omega \ | \ f(x_{1}, \dots, x_{m}, \omega) = 0\}) = 1$ and 
 $F_{\qtorand{f_{0}}} : y=\zero$ and, since $y = \zero$ as $\sem{y}{} = 0$ and $\sem{\zero}{} = 0$, $\muCyl(F_{\qtorand{f_{0}}}) = 1$.}
 $$
F_{\qtorand{f_{0}}} : y=\zero.
$$
 
\par Oracle successor function. Let $f_{s}\in\QPR$
be the successor function.
For Definition~\ref{auxiliary},
 $f_{s}(x, \omega) = y$, with $y$ = $x$+1, is such that 
$\qtorand{f_{s}}$ is defined by the $\PPA$ formula, 
$$
F_{\qtorand{f_{s}}} : \suc{x} = y.
$$

\par Oracle projection function. 
Let $f_{\pi_{i}}\in \QPR$ be the projection function
Definition~\ref{auxiliary}, $f_{\pi_{i}}(x_{1},$ $\dots, x_{k}, \omega) = y$, with $y = x_{i},$ be the
is such that
$f_{\pi_{i}}($ such that $\qtorand{f_{\pi_{i}}}$ is defined by the $\PPA$ formula
$$
F_{\pi_{i}^{k}} : x_{i} = y.
$$

\par Oracle query function.
Let $q\in \QPR$ be the query function.
For Definition~\ref{auxiliary}, 
$q(x, \omega) = \omega(x)$,
is such that $\qtorand{f_{q}}$ is
defined by the $\PPA$ formula
$$
F_{q} : x = \atm{x}.
$$

\item For each oracle function $f \in \QPR$
 obtained composition, primitive recursion, or minimization from oracle recursive functions, 
 whose corresponding random functions are \arithmetical, the corresponding random function $\qtorand{f}$ is \arithmetical. 
The proofs for these three cases are very similar 
to the standard ones.
As an example, let us consider the case of
(simple) composition only.

Oracle (simple) composition.
Let $f$ be obtained by composition from 
$h$ and $g$, i.e.~$f(x_{1},$ $\dots, x_{n}, \omega)$ = 
$h(g(x_{1}, \dots, x_{n}, \omega), \omega)$ = $v$. 
It is possible show that $\qtorand{f}(x_{1}, \dots, x_{n})(v)$ is \arithmetical, which is, for every $x_{1}, \dots, x_{n}, v \in \Nat$, there is a $\PPA$ formula, 
$\fone_{\qtorand{f}}(x_{1}, \dots, x_{n}, v)$, such that 
$\muCyl(\sem{\fone_{\qtorand{f}}(x_{1}, \dots, x_{n}, v)}{})$ = $\qtorand{f}$.
Indeed, the desired formula is 
$$\fone_{\qtorand{f}} = \exists v\big(\fone_{\qtorand{h}}(v,y) \wedge \fone_{\qtorand{g}}(x_{1}, \dots, x_{n}, v)\big)
$$
where, by IH, $\muCyl(\sem{\fone_{\qtorand{h}}}{})$ = $\qtorand{h}$ and $\muCyl(\sem{\fone_{\qtorand{g}}}{})$ = $\qtorand{g}$.
\end{varitemize}
\end{proof}
}
\longv{As seen, s}\shortv{S}ince for both $\QPR$ and $\PPA$
the source of randomness consists 
in a denumerable amount of random bits, the proof of Lemma \ref{lemma:randomfunction} is easy, and follows the standard induction of~\cite{Godel}. 
Theorem~\ref{theorem:arithmetization} comes out as a corollary of the Lemma~\ref{lemma:randomfunction} above, together
with Proposition~\ref{proposition}: 
any computable random function is in $\PR$, by Proposition~\ref{proposition}, and each $\PR$ function is \arithmetical, by Lemma~\ref{lemma:randomfunction} and Proposition~\ref{prop2}.
\longv{Indeed, by Proposition~\ref{prop2}, for any $f \in \PR$ there is a $g \in \QPR$ such that $f = \qtorand{g}$ and, since $g \in \QPR$, by Lemma~\ref{lemma:randomfunction}, $\qtorand{g}$ (= $f$) is arithmetical.
}

\section{Realizability}\label{Realizability}

In this section we sketch an extension of realizability, a well-known computational interpretation of Peano Arithmetics, to $\PPA$. The theory of realizability\shortv{~\cite{Troelstra98}}\longv{~\cite{Troelstra98,Godel58,Kreisel57,SorUrz}}, which dates back to Kleene's 1945 paper~\cite{Kleene45}, 
provides a strong connection between logic, computability, and programming language theory.
The fundamental idea behind realizability
is that from every proof of an arithmetical formula in $\HA$ or equivalently (via the G\"odel-Gentzen translation) in $\PA$, one can extract a program, called the \emph{realizer} of the formula, which encodes the computational content of the proof. In Kreisel's \emph{modified-realizability}~\cite{Kreisel57}  realizers are typed programs: any formula $A$ of $\HA$ is associated with a type $A^{*}$ and any proof of $A$ yields a realizer of type $A^{*}$. 

Our goal is to show that the modified-realizability interpretation of $\HA$ can be extended to the language $\PPA$.
As we have not introduced a proof system for $\PPA$ yet, 
we limit ourselves to establishing the soundness of modified-realizability with respect to the semantics of $\PPA$.
Similarly to what happens with the class $\QPR$, 
the fundamental intuition is that realizers correspond to programs which can query an \emph{oracle} $\omega\in\Bool^{\Nat}$. For instance, a realizer of $\BOX^{t/s}A$ is a program which, for a randomly chosen oracle, yields a realizer of $A$ with probability at least $\sem{t}{\env}/\sem{s}{\env}$.

%
%

Our starting point is a PCF-style language with oracles. The types of this language are generated by 
basic types $\mathrm{nat}, \mathrm{bool}$ and the connectives $\to$ and $\times$.
%
We let $\mathrm O:= \mathrm{nat}\to\mathrm{bool}$ indicate the type of \emph{oracles}.
For any type $\sigma$, we let $[\sigma]$ (resp.~$[\sigma]_{\mathrm O}$) indicate the set of closed terms of type $\sigma$ (resp.~of terms of type $\sigma$ with a unique free variable $o$ of type $\mathrm O$).
Moreover, for all $i\in\{0,1\}$ (resp.~$n\in \mathbb N$), we indicate as $\overline i$ (resp.~$\overline n$) the associated normal form of type $\mathrm{bool}$ (resp.~$\mathrm{nat}$).
For all term $M$ and normal form $N$, we let $M\Downarrow N$ indicate that $M$ converges to $N$.
For any term $M\in [\sigma_{\mathrm O}]$ and oracle $\omega\in \Bool^{\Nat }$, we let $M^{\omega}\in[\sigma]$ indicate the closed program in which any call to the variable $o$ is answered by the oracle $\omega$.

We consider the language of $\PPA$ without negation and disjunction, enriched with implication $A\to B$.
As is usually done in modified-realizability, we take $\lnot A$ and $A\lor B$ as \emph{defined} connectives, given by $A\to (\mathtt 0=\mathtt S(\mathtt 0))$ and $\exists x. (x=\mathtt 0 \to A)\land(x=\mathtt S(\mathtt 0)\to B)$, respectively.
With any closed formula $A$ of $\PPA$ we associate a type $A^{*}$ defined as follows:\\
\medskip
\resizebox{0.95\textwidth}{!}{
\begin{minipage}{\linewidth}
\begin{align*}
\atm t^{*} & =  \mathrm{nat}    &   (\forall x.A)^{*} & = \mathrm{nat}\to A^{*} 
\\
(t=u)^{*} & =  \mathrm{bool}  &  (\exists x.A)^{*} &  = \mathrm{nat} \times A^{*}\\
(A\land B)^{*} & = A^{*}\times B^{*}   & (\BOX^{t/s}A)^{*} & =(\DIA^{t/s}A)^{*} 
 = \mathrm{O} \to A^{*}  \\ 
 (A\to B)^{*} &  = A^{*}\to B^{*} & & 
\end{align*}
\end{minipage}
}
\medskip

We define by induction the realizability relation 
$M, \omega \Vdash A$
 where $\omega\in \Bool^{\Nat}$ and, if $A=\BOX^{t/s}B$
 or $A=\DIA^{t/s}B$, $M\in [\sigma]$, and otherwise $M\in [\sigma]_{\mathrm O}$.

\begin{varenumerate}
\item $M, \omega \Vdash \atm t$ iff $\omega(\sem{t}{})=1$;
\item $M,\omega \Vdash t=s $ iff $\sem{t}{}=\sem{s}{}$;
%
\item $M, \omega \Vdash A_{1}\land A_{2}$ iff $\pi_{1}(M), \omega \Vdash A_{1}$ and $\pi_{2}(M),\omega\Vdash A_{2}$;

\item $M, \omega \Vdash A\to B$ iff $\omega\in \sem{A\to B}{}$ and 
$P,\omega\Vdash A$ implies $MP, \omega \Vdash B$;
\item $M, \omega \Vdash\exists x.A$ iff $\pi_{1}(M^{\omega})\Downarrow \overline k$ and $\pi_{2}(M), \omega\Vdash A(k/x)$;

\item $M,\omega \Vdash \forall x.A$ iff for all $k\in \mathbb N$, $M\overline k, \omega \Vdash A(k/x)$;
\item $M, \omega \Vdash \BOX^{t/s}A$ iff $\sem{s}{}> 0$ and 
$\muCyl\left(\{\omega'\mid Mo, \omega' \Vdash A\}\right)
\geq{ \sem{t}{}}/{\sem{s}{}}$;
\item $M, \omega \Vdash \DIA^{t/s}A$ iff $\omega\in \sem{\DIA^{t/s}A}{}$, and $\sem{s}{}=0$ or  
$\muCyl\left(\{\omega'\mid Mo, \omega' \Vdash A\}\right)
<{ \sem{t}{}}/{\sem{s}{}}$.
%


\end{varenumerate}
Condition 7.~is justified by the fact that for all term $M$ and formula $A$, the set $\{\omega\mid M,\omega \Vdash A\}$ can be shown to be measurable. Conditions 5.~and 9.~include a semantic condition of the form $\omega\in \sem{A}{}$, which has no computational meaning. This condition is added in view of Theorem \ref{thm:soundness} below. In fact, also in standard realizability
a similar semantic condition for implication is required to
show that realizable formulas are true in the standard model, see \cite{Troelstra98}.

\longv{

\begin{lemma}\label{lemma:owns}
For each term $M\in [\sigma]_{\mathrm O}$ and normal form $N\in [\sigma]$, 
the set $S_{M,N}$ of oracles $\omega$ such that $M^{\omega}\Downarrow N$ is measurable.
\end{lemma}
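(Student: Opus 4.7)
The plan is to exploit the standard \emph{finite use of oracle} property: when $M^{\omega}$ reduces to a normal form, the evaluation trace has finite length and therefore can only consult finitely many positions of $\omega$. This will let me express $S_{M,N}$ as a countable union of cylinders, which lies in $\sigma(\cylS{})$.

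First, I would fix a call-by-name (or call-by-value) small-step operational semantics for the oracle PCF-language in which the only way to inspect the oracle is via an application $o\,\overline n$, reducing to $\overline{\omega(n)}$. Since this semantics is deterministic, for every $\omega$ the evaluation of $M^{\omega}$ either diverges or produces a unique normal form. In particular, if $M^{\omega}\Downarrow N$ then only finitely many positions $n_1,\ldots,n_k$ of $\omega$ have been queried during the reduction, and the reduction depends on $\omega$ only through the values $\omega(n_1),\ldots,\omega(n_k)$.

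Next, for any finite partial function $f : \Nat \rightharpoonup \Bool$ with finite domain $D_f$, let $C_f = \{\omega \in \Bool^{\Nat} \mid \omega\restriction_{D_f}=f\}$. Each such $C_f$ is a finite intersection of thin cylinders of the form $\Cyl{X^{1}_n}$ or their complements, hence $C_f \in \sigma(\cylS{})$. Call $f$ \emph{conclusive for $(M,N)$} if for every $\omega \in C_f$ the reduction of $M^{\omega}$ queries only positions in $D_f$ and terminates with $N$; by determinism plus finite use, $\omega \in S_{M,N}$ if and only if $\omega \in C_f$ for some conclusive $f$ (taking $f$ to be the restriction of $\omega$ to the positions actually queried along the terminating trace). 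Therefore
\[
S_{M,N} \;=\; \bigcup\bigl\{\,C_f \;\bigm|\; f \text{ is conclusive for } (M,N)\,\bigr\}.
\]
Since there are only countably many finite partial functions $f : \Nat \rightharpoonup \Bool$, this is a countable union of measurable sets, and hence $S_{M,N} \in \sigma(\cylS{})$.

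The main obstacle is essentially bureaucratic rather than conceptual: one has to make precise the finite-use property by induction on the length of the reduction sequence of $M^{\omega}$, using that each individual reduction step either performs at most one query to $o$ or none, and that the set of positions queried grows monotonically with the number of steps. Once this is established, determinism of the reduction together with the fact that normal forms of closed terms of each type $\sigma$ form a decidable set (so conclusiveness of a given $f$ is a well-defined condition) makes the countable-union argument immediate.
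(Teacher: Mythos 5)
Your proof is correct and follows essentially the same route as the paper's: both exploit that a terminating reduction of $M^{\omega}$ queries only finitely many oracle bits, so that $S_{M,N}$ is a countable union of cylinders (you index the union by conclusive finite partial functions, the paper by the finite reductions themselves), hence measurable.
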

\begin{proof}
Let $\widehat M=\{M^{\omega}\mid \omega \in \Bool^{\Nat}\}$. 
Let $r$ be a reduction from some $P=M^{\omega}\in \widehat M$ to $N$; since $r$ is finite, it can only query finitely many values $b_{1},\dots, b_{n_{q}}$ of $\omega$.
Let  $C_{r}$ be the cylinder of all sequences which agree with $\omega$ at  $b_{1},\dots, b_{n_{q}}$. It is clear that for all $\omega'\in C_{r}$, $M^{\omega'}$ reduces to $N$. 
We have then that $S_{M,N}= \bigcup\{ C_{r}\mid \exists P\in \widehat M \text{ and }  r \text{ is a reduction from } P \text{ to }N \}$, so $S_{M,N}$ is a countable union of measurable sets, and it is thus measurable.
%
%
%
%
%
\end{proof}

\begin{lemma}\label{lemma:measu}
For each term $M$ and closed formula $A$, the set $S_{M,A}=\{\omega \mid M,\omega \vDash A\}$ is measurable.
\end{lemma}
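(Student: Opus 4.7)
The plan is to proceed by induction on the structure of the closed formula $A$, exploiting the closure of the $\sigma$-algebra $\sigma(\cylS{})$ under countable unions, countable intersections, and complements, and invoking Lemma~\ref{lemma:owns} whenever the convergence of an oracle-dependent program enters the definition of the realizability relation. A crucial auxiliary observation I would use throughout is that the set of closed terms of any given type (with or without a free oracle variable) is countable, since the language is built from a countable alphabet.

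First, I would dispose of the base cases. For $A=\atm t$, $S_{M,A}$ is the thin cylinder $\{\omega\mid\omega(\sem{t}{})=1\}$, hence in $\cylS{}$; for $A=(t=s)$, $S_{M,A}$ is either $\twoOm$ or $\emptyset$. For the measure-quantified cases $A=\BOX^{t/s}B$ and $A=\DIA^{t/s}B$, the condition $\muCyl(\{\omega'\mid Mo,\omega'\Vdash B\})\geq\sem{t}{}/\sem{s}{}$ (respectively $<$) does not depend on $\omega$, so $S_{M,A}$ collapses either to $\twoOm$, to $\emptyset$, or---in the diamond case, because of the extra semantic condition $\omega\in\sem{\DIA^{t/s}B}{}$---to the measurable set $\sem{\DIA^{t/s}B}{}$ itself, whose measurability is guaranteed by Definition~\ref{Semantics}.

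Next, I would handle the propositional and first-order inductive cases by expressing $S_{M,A}$ as a Boolean combination of sets that are measurable by the induction hypothesis: $S_{M,A_{1}\land A_{2}}=S_{\pi_{1}(M),A_{1}}\cap S_{\pi_{2}(M),A_{2}}$; $S_{M,\forall x.B}=\bigcap_{k\in\Nat}S_{M\overline{k},B(k/x)}$ is a countable intersection; and $S_{M,\exists x.B}=\bigcup_{k\in\Nat}\bigl(\{\omega\mid \pi_{1}(M)^{\omega}\Downarrow\overline{k}\}\cap S_{\pi_{2}(M),B(k/x)}\bigr)$ is a countable union, in which the convergence factor is measurable thanks to Lemma~\ref{lemma:owns} and the other factor is measurable by the induction hypothesis.

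The hard part will be the implication case, where the realizability clause universally quantifies over all realizers $P$. Here I would write $S_{M,A\to B}=\sem{A\to B}{}\cap\bigcap_{P\in[A^{*}]_{\mathrm O}}\bigl((\twoOm\setminus S_{P,A})\cup S_{MP,B}\bigr)$, so that measurability reduces to three ingredients: countability of $[A^{*}]_{\mathrm O}$ (so that the indexed intersection is denumerable), measurability of $\sem{A\to B}{}$ (which follows from the natural extension of Definition~\ref{Semantics} to primitive implication, since $\sem{A\to B}{}=(\twoOm\setminus\sem{A}{})\cup\sem{B}{}$), and measurability of each factor inside the intersection (which is provided by the induction hypothesis together with closure under complement and union). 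The induction then closes, establishing $S_{M,A}\in\sigma(\cylS{})$ for all $M$ and $A$.
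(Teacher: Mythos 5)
Your proof is correct and follows essentially the same route as the paper's: structural induction on $A$, with the base and measure-quantified cases collapsing to cylinders, $\twoOm$, or $\emptyset$, the existential case handled via Lemma~\ref{lemma:owns}, and the remaining cases via countable Boolean combinations of sets given by the induction hypothesis. Your implication case is in fact stated more carefully than the paper's (which writes $\bigcap_{P}(S_{P,B}\cap S_{MP,C})$ rather than $\bigcap_{P}\bigl((\twoOm\setminus S_{P,B})\cup S_{MP,C}\bigr)$ and leaves the countability of $[A^{*}]_{\mathrm O}$ and the semantic side condition implicit), but this is a refinement of the same argument, not a different one.
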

\begin{proof}
By induction on the structure of $A$:
\begin{varenumerate}
\item if $A=\atm t$, then $S_{M,A}$ is the cylinder of all $\omega$ such that $\omega(\sem{t}{})=1$;

\item if $A=t=s$, then $S_{M,A}= \begin{cases}
\Bool^{\Nat}  & \text{ if } \sem{t}{}=\sem{s}{}\\
\emptyset & \text{ otherwise;}\end{cases}$


\item if $A=A_{1}\land A_{2}$, then $S_{M,A}= S_{\pi_{1}(M),A_{1}}\cap S_{\pi_{2}(M),A_{2}}$;

\item if $A=B\to C$, then
$S_{M,A}= \bigcap_{P\in [B^{*}]_{\mathrm O}}(S_{P,B}\cap S_{MP,C})$;

\item if $A=\exists x.B$, then $S_{M,A}= \bigcup_{k}( S_{\pi_{1}(M), \overline k}\cap
S_{\pi_{2}(M), A(k/x)})$;

\item if $A=\forall x.B$, then $S_{M,A}= \bigcap_{k}S_{M\overline k, A(k/x)}$;

\item if $A=\BOX^{t/s}B$, then $S_{M,A}=\begin{cases} \Bool^{\Nat} & \text{ if } \sem{s}{}>0 \text{ and }\muCyl(S_{Mo, B})\geq \sem{t}{}/\sem{s}{} \\ \emptyset & \text{ otherwise;}\end{cases}$
\item if $A=\DIA^{t/s}B$, then $S_{M,A}=\begin{cases} \Bool^{\Nat} & \text{ if } \sem{s}{}=0 \text{ or } \muCyl(S_{Mo, B})< \sem{t}{}/\sem{s}{} \\ \emptyset & \text{ otherwise.}\end{cases}$
\end{varenumerate}
\end{proof}}

%
%
%
%
%
%
%
%
%
%


\begin{theorem}[Soundness]\label{thm:soundness}
For a closed formula $A$, if $M,\omega\Vdash A$, then $\omega\in \sem{A}{}$.
\end{theorem}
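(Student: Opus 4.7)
The plan is to proceed by structural induction on the closed formula $A$, showing in each case that the realizability clause with the given $\omega$ implies the set-theoretic condition $\omega \in \sem{A}{}$.

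The atomic cases are mere unfoldings of definitions: if $M,\omega \Vdash \atm t$ then $\omega(\sem{t}{}) = 1$, which is precisely the defining condition for membership in $\Cyl{X^{1}_{\sem{t}{}}} = \sem{\atm t}{}$; and if $M,\omega \Vdash t=s$, then $\sem{t}{} = \sem{s}{}$, so $\sem{t=s}{} = \Bool^{\Nat} \ni \omega$. The conjunction, universal, and existential cases follow by direct application of the induction hypothesis combined with the intersection/union structure of the corresponding semantic clauses. The cases for $A = B \to C$ and $A = \DIA^{t/s}B$ are immediate, since the realizability clauses explicitly include $\omega \in \sem{A}{}$ as a side condition; this, incidentally, is exactly what justifies the presence of those semantic conditions in the definition of $\Vdash$.

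The only genuinely interesting case is $A = \BOX^{t/s}B$. Here the target $\omega \in \sem{\BOX^{t/s}B}{}$ requires $\sem{s}{} > 0$ and $\muCyl(\sem{B}{}) \geq \sem{t}{}/\sem{s}{}$, whereas realizability directly bounds only the realizer set: $\muCyl(\{\omega' \mid Mo, \omega' \Vdash B\}) \geq \sem{t}{}/\sem{s}{}$. Applying the induction hypothesis pointwise yields the inclusion $\{\omega' \mid Mo, \omega' \Vdash B\} \subseteq \sem{B}{}$, and monotonicity of $\muCyl$ then transfers the bound to $\sem{B}{}$. This forces $\sem{\BOX^{t/s}B}{} = \Bool^{\Nat}$, which of course contains $\omega$.

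The main subtlety is that, to invoke monotonicity of $\muCyl$ in the $\BOX$ case, one needs the realizer set $\{\omega' \mid Mo, \omega' \Vdash B\}$ to be $\sigma(\cylS{})$-measurable. I would therefore first establish, by a parallel structural induction on formulas, that the set $\{\omega \mid M,\omega \Vdash A\}$ belongs to $\sigma(\cylS{})$ for every $M$ and $A$, tracing the $\sigma$-algebra closure through each realizability clause; the only mildly delicate sub-case is implication, whose realizer set is a countable intersection indexed over closed programs of the appropriate type.
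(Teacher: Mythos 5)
Your proposal is correct and follows essentially the same route as the paper's own proof: structural induction on $A$, where the atomic, conjunction and quantifier cases are immediate, the $\to$ and $\DIA$ cases hold by the built-in semantic side condition, and the $\BOX^{t/s}$ case uses the induction hypothesis to get $\{\omega'\mid Mo,\omega'\Vdash B\}\subseteq\sem{B}{}$ and monotonicity of $\muCyl$. The measurability of the realizer sets that you flag is exactly what the paper establishes beforehand by a separate induction on formulas (its measurability lemma, with the cylinder argument for oracle-dependent convergence), so your parallel induction is the same supporting step.
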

\longv{
\begin{proof}
By induction on $A$:
\begin{varenumerate}
\item if $A=\atm t$ and $M,\omega \Vdash A$, then $\omega(\sem{t}{})=1$, so $\omega \in \sem{\atm t}{}$;

\item if $A=t=s$ and $M,\omega \Vdash A$, then $\sem{t}{}=\sem{s}{}$, so $\omega \in \Bool^{\Nat}= \sem{A}{}$;



\item if $A=A_{1}\land A_{2}$ and $M,\omega \Vdash A$, then from $\pi_{1}(M),\omega\vDash A_{1}$ and $\pi_{2}(M),\omega\vdash A_{2}$ we deduce 
$\omega\in \sem{A_{1}}{} \cap \sem{A_{2}}{}=\sem{A}{}$;


%
%
\item if $A=B\to C$ and $M,\omega \Vdash A$, then by definition $\omega\in \sem{A}{}$;
%

%

\item if $A=\exists x.B$ and $M,\omega\Vdash A$ then $\pi_{1}(M^{\omega})\Downarrow \overline k$ and 
$\pi_{2}(M), \omega \Vdash B(k/x)$, so by IH $\omega\in \sem{B(k/x)}{}\subseteq \sem{A}{}$;

\item if $A=\forall x.B$ and $M,\omega \Vdash A$ then for all $k \in \mathbb N$, $ M\overline k, \omega \Vdash A(k/x)$, so by IH $\omega \in \sem{A(k/x)}{}$ and we conclude $\omega \in \sem{A}{}= \bigcap_{k}\sem{A(k/x)}{}$; 

%

\item if $A=\BOX^{t/s}B$ and $M, \omega \Vdash A$ then $\sem{s}{}>0$ and 
$\muCyl(S)\geq \sem{t}{}/\sem{s}{}$, where $S=\{\omega' \mid Mo, \omega'\Vdash B\}$.
%
%
Then, 
by IH
$S\subseteq \sem{B}{}$, hence $\muCyl(\sem{B}{})\geq \muCyl(S)\geq \sem{t}{}/\sem{s}{}$. 
We conclude then that $\sem{A}{}=\Bool^{\Nat}$ and thus $\omega \in \sem{A}{}$;
%
%
%
\item if $A=\DIA^{t/s}B$ and $M, \omega \Vdash A$ then by definition $\omega \in\sem{\DIA^{t/s}B}{}$.
%
%
%

\end{varenumerate}
\end{proof}
}
For example, 
the term $
M= \lambda o. \mathrm{fix}\big (\lambda f x.(\mathrm{iszero}(ox))(f(x+1))\langle x,x\rangle\big )\overline 0
$ realizes the valid formula $\BOX^{1/1}\exists x.\atm x$. $M$ looks for the first value $k$ such that $o(k)=1$ and returns the pair $\langle \overline k, \overline k\rangle$. 
Similarly, the program $\lambda xoyz.o(y)$, which checks whether the $y$-th bit of $\omega$ is true, 
realizes the formula $\forall x.\BOX^{1/2^{x}}\forall_{y\leq x}\atm y$.
With the same intuition, one can imagine how a realizer $M$ of the formula $F_{\mathtt{IMT}}$ can be constructed: given inputs $x, o, y$, $M$ looks for the first $k$ such that the finite sequence $o(y+k),o(y+k+1),\dots, o(y+k + {\ell(n))}$ coincides with the string coded by $x$ (where this last check can be encoded by a program $\lambda w.P(x,o,y,z,w)$), and returns the pair $\langle \overline k, \lambda w. P(x,0,y,\overline k, w)\rangle$.

\longv{
\section{Related Works}\label{RelatedWorks}

To the best of the authors' knowledge, the term ``measure quantifier'' was first introduced by Morgenstern in 1979 
in order to formalize the idea that a formula $\fone(x)$ is true \emph{for almost all x}~\cite{Morgenstern}.\footnote{
Morgenstern's definition was inspired by the notion of generalized quantifier, which was ``introduced to specify that a given formula was true for ``many $x$'s"'' \cite[p.~103]{Morgenstern}.
Morgenstern defined a language $L_{\mu}$, 
obtained by adding the measure quantifier $Q_{\mu}$ 
to the standard first-order grammar and 
presented the the central notions of his logic
as follows (actually, other extended languages are considered in~\cite{Morgenstern}):
\begin{quote}
\textsc{Definition 2.1} A \emph{measure structure} $\mathcal{U}$ is a pair $\mathcal{U} = (\mathcal{U}, \mu^{\mathcal{U}})$, where $\mathcal{U}'$ is a first-order structure, card$|\mathcal{U}| = k$, a measurable cardinal, and $\mu^{\mathcal{U}}$ is a nontrivial $k$-additive measure on $|\mathcal{U'}|$ which satisfies the partition property. [...]
\par \textsc{Definition 2.2} Define a language $L_{\mu}$ to be a first-order language together with a quantifier $Q_{\mu}$, binding one free variable, where a measure $\mathcal{U} \vDash Q_{\mu}v_{0}\varphi(v_{0})$ iff $\{x \in |\mathcal{U}| \ | \ \mathcal{U'} \vDash \varphi[x]\} \in \mu^{\mathcal{U}}$.~\cite[pp.~103-104]{Morgenstern}
\end{quote}}
In the same years, similar quantifiers were investigated from a model-theoretic perspective by H. Friedman
(see~\cite{steinhorn} for a survey). 
More recently, Mio et al.~\cite{MichalewskiMio,MSM} investigated the possibility for such quantifiers 
to define extensions of MSO.
Generally speaking, all these works have been 
strongly inspired by the notion of generalized quantifiers, 
which already appeared in a seminal work by Mostowski~\cite{Mostowski}.\footnote{
Specifically, generalized quantifiers were first introduced by Mostowski, as ``operators which represent a natural generalization of the logical quantifiers"~\cite[p.~13]{Mostowski} and have then been 
extensively studied in the context of finite-model theory~\cite{Lindstrom,Kontinen}. Second-order generalized quantifiers have been recently defined as well~\cite{Andersson}.}
Nevertheless, the main source of inspiration for our treatment of measure quantifiers comes from computational complexity, namely from Wagner's counting operators on classes of languages~\cite{Wagner84,Wagner,Wagner86}.\footnote{For further details, see \cite{ADLP}, where the model theory and proof theory of an extension of propositional logic with counting quantifiers is studied (in particular, the logic $\mathsf{CPL}_{0}$ can be seen as a ``finitary'' fragment of $\PPA$).
} 
 \longv{

}On the other hand, there is an extensive amount of publications dealing with different forms of probabilistic reasoning (without references to arithmetic).
Most of the recent probability systems have been developed in the realm of modal logic, starting with the seminal (propositional) work by Nilsson~\cite{Nilsson86}. 
From the 1990s on, first-order probability logic and (axiomatic) proof systems, have been independently introduced by Bacchus~\cite{Bacchus90a,Bacchus90b,Bacchus} and Fagin, Halpern and Megiddo~\cite{FHM,FH94,Halpern90,Halpern03}.
\longv{
Remarkably, Bacchus defined \emph{probability terms}, by means of a modal operator $\mathsf{prob}$ 
computing the probability of certain events, 
and \emph{probability formulas}, which are equalities between probability terms and numbers.
 A similar first-order probability logic was introduced by Fagin, Halpern and Megiddo~\cite{FHM} (and later studied in~\cite{FH94,Halpern90,Halpern03}), in which probability spaces define the underlying models, and can be accessed through the so-called weight terms. 
}
Another class of probabilistic modal logics have been designed to model Markov chains and similar structure, for example in~\cite{KOZEN1981328,Hansson1994,LEHMANN1982165,lmcs:6054}.
However, \longv{once again, }no reference to arithmetic is present in these works.

From the 1950s on, the interest for probabilistic algorithms and models started spreading~\cite{LMSS,Davis61,Carlyle,Rabin63,Santos68,Gill74,Simon75}. Nowadays, random computation is pervasive in many area of computer science and, several formal models are available, such as probabilistic automata~\cite{Segala95}, both Markovian and oracle probabilistic Turing machines~\cite{Santos69,Santos71,Gill74,Gill77}, and probabilistic $\lambda$-calculi~\cite{SahebDjaromi,JonesPlotkin,DPHW,DLZ,EPT}. 
As seen, also a well-defined probabilistic recursion theory has been developed by~\cite{GDLZ,DalLagoZuppiroli}.
Our definition of the class $\QPR$ is guided by both by recent $\PR$~\cite{GDLZ,DalLagoZuppiroli} and by classical recursion theory~\cite{Godel,ChurchKleene,Kleene36,Kleene36b,Kleene36c,Kleene43,Turing,Peter}.\footnote{For further details on the history of the notion of recursion, see~\cite{Soare}.} 
Our definition of random arithmetical formulas and Theorem \ref{theorem:arithmetization} generalize the original results by G\"odel~\cite[pp.~63--65]{Godel92}.
Also our study of realizability is inspired by classical works. 
The functional or D-interpretation 
was first introduced by G\"odel in 1958 in order to prove the consistency of arithmetic~\cite{Godel58}.\footnote{Actually, G\"odel started conceiving the D-interpretation in the late 1930s~\cite{Soare,AvigadFeferman}.} 
The theory was further developed by Kreisel, 
who introduced the notion of modified-realizability~\cite{Kreisel57}, starting from Kleene's  realizability~\cite{Kleene45}.

}

\section{Conclusion}\label{Conclusion}

\shortv{\paragraph{Future and Ongoing Work.}}
This paper can be seen as ``a first exploration'' of $\PPA$, providing some preliminary results, but also leaving many problems and challenges open. 
The most compelling one is certainly that of defining a proof system for $\PPA$, perhaps inspired from realizability. 
Furthermore, our extension of $\mathsf{PA}$ is \emph{minimal} by design. In particular, we confined our presentation to a unique predicate variable, $\atm x$. Yet, it is possible to 
consider a more general language with countably many predicate variables $\mathsf{FLIP}_{a}(x)$, and suitably-\emph{named} quantifiers $\BOX^{t/s}_{a}$ and $\DIA^{t/s}_{a}$ (as in \cite{ADLP}). 
We leave the exploration of this more sophisticated syntax to future work.
%
Another intriguing line of work concerns the study of bounded versions of $\PPA$, which may suggest novel ways of capturing probabilistic complexity classes, different from those in the literature, e.g.~\cite{Jerabek2007}.

\shortv{
\paragraph{Related Work.}
To the best of the authors' knowledge, the term ``measure quantifier'' was first introduced in 1979 to formalize the idea that a formula $\fone(x)$ is true \emph{for almost all values of x}~\cite{Morgenstern}. In the same years, similar measure quantifiers were investigated from a model-theoretic perspective by H. Friedman
(see~\cite{steinhorn} for a survey). 
More recently, Mio et al.~\cite{MSM} studied the application of such quantifiers to define extensions of MSO. 
However, the main source of inspiration for our treatment of measure quantifiers comes from computational complexity, namely from Wagner's counting operators on classes of languages~\cite{Wagner}.\footnote{For further details, see \cite{ADLP}, where the model theory and proof theory of an extension of propositional logic with counting quantifiers is studied (in particular, the logic $\mathsf{CPL}_{0}$ can be seen as a ``finitary'' fragment of $\PPA$).
} 
On the other hand, an extensive literature exists on formal methods to describe probabilistic reasoning (without any reference to arithmetic), in particular in the realm of modal logic~\cite{Bacchus,FHM}.
\longv{
Starting from the seminal (propositional) work by Nilsson~\cite{Nilsson86}, from the 1990s on, first-order probability logic and (axiomatic) proof systems have been (independently) introduced by Bacchus~\cite{Bacchus} and Fagin, Halpern and Megiddo~\cite{FHM,Halpern90}.}
Universal modalities show affinities with
counting quantifiers. However, the latter
focusses on \emph{counting} satisfying valuations,
rather than on identifying (sets of) worlds. Moreover, classes of probabilistic modal logics have been designed to model Markov chains and similar structure, e.g. in~\cite{lmcs:6054}.
\longv{
\cite{KOZEN1981328,Hansson1994,LEHMANN1982165,lmcs:6054}.
}
}

%
%
%
 \bibliographystyle{splncs04}
 \bibliography{main}






\end{document}